\DeclareMathOperator{\Dgm}{Dgm}
\title{A Topological Centrality Measure for Directed Networks}
\author{Fenghuan He\footnote{Commonwealth School, Boston, MA. Email: lhe@commschool.org}}
\begin{document}

\maketitle

\begin{abstract}

Given a directed network $ G $, we are interested in studying the qualitative features of $ G $ which govern how perturbations propagate across $ G $.
Various classical centrality measures have been already developed and proven useful to capture qualitative features and behaviors for undirected networks. 
In this paper, we use topological data analysis (TDA) to adapt measures of centrality to capture both directedness and non-local propagating behaviors in networks. 
We introduce a new metric for computing centrality in directed weighted networks, namely the \emph{quasi-centrality} measure. We compute these metrics on trade networks to illustrate that our measure successfully captures propagating effects in the network and can also be used to identify sources of shocks that can disrupt the topology of directed networks. Moreover, we introduce a method that gives a hierarchical representation of the topological influences of nodes in a directed network. 
\end{abstract} 

\textbf{Key Words}: Complex Networks, Centrality, Directed Networks, Trade Networks, Topological Data Analysis, Persistent Homology, Hierarchical Clustering

\tableofcontents
 
\section{Introduction}
Networks are a useful abstraction for many real-world systems, representing interactions between objects within a system. Network analysis examines relationships among entities, such as persons, organizations, or documents \cite{wright_2015} and has been extensively adopted for modeling systems in various domains with a long history of applications \cite{doi:10.1137/S003614450342480} \cite{Costa_2011},  including neuroscience \cite{şimşek2021geometry}, biology \cite{Leifer_2020}, and social networks \cite{doi:10.1177/016555150202800601}. Networks exist in various forms: weighted or unweighted, directed or undirected. However, many complex systems are more accurately modeled by directed weighted networks, where the relation between two different entities in the system is asymmetric and exists in a range of intensities.  
For example, citation networks, immigration networks, and trade networks are all accurately modeled by directed weighted networks. 

Directed networks are mathematically represented as a directed graph where vertices represent the objects or entities in the system, and edges encode the interaction between individual objects. Previous studies have implemented a wide range of measures for studying networks \cite{rider_2005}, including node centrality \cite{ficara2021correlation}, clustering coefficients \cite{Masuda_2018} and path lengths between nodes \cite{melnik2016simple}. In this paper, we focus on analyzing node centrality in directed weighted networks.

Node centrality measures the influence of a node in the entire network by assigning rankings and numbers to nodes within the network based on their network position. Centrality measures enable us to detect various real-world phenomena, including the identification of banks that are too-connected-to-fail \cite{GOFMAN2017113} and decisions regarding human capital or education \cite{RePEc:cpr:ceprdp:10631}.  Different centrality measures have been developed to capture different behaviors.  For example, betweenness centrality has been widely used in social networks \cite{Lee_2021}, eigenvector centrality has proved to be useful in temporal networks \cite{taylor2016eigenvectorbased}, optimal percolation centrality has been widely used in large networks \cite{delima2020estimating}, and K-core centrality has been widely used in dynamic networks \cite{Liu_2020}.

In this paper, we are interested in studying the propagating properties of a network since influences of individuals in many real-world complex systems have the potential to propagate over the entire system. This phenomena can be meaningfully illustrated on the global trade network, where globalization of trade and intertwined economies around the world can cause economic perturbations originated in a single country propagate elsewhere. 
Moreover, since world economies are exhibiting increasing levels of local heterogeneity and global interdependency \cite{Serrano_2007}, it is crucial to take a topological standpoint on defining the centrality measure for understanding propagating effects in trade networks. Thus, we define a new centrality measure using persistent homology, namely the \emph{quasi-centrality} (Definition \ref{def: quasi-central nonno}) to determine the ranks of the nodes in directed weighted networks based on this property.  

Persistent homology is a central tool used in topological data analysis (TDA), which is a burgeoning field in math and data analysis where concepts from algebraic topology are used to simplify, summarize, and compare complex data sets. 
TDA has found successful applications in neuroscience \cite{DBLP:journals/ficn/Dabaghian20}, biological models \cite{Topaz_2015} machine learning \cite{DBLP:journals/corr/ChepushtanovaEH15}, and other related fields in statistics, math, physics, and biology. However, while many applications of TDA have focused on simplifying and identifying the intrinsic shapes of complex data sets, it has not been extensively applied to analyze networks. Hence the significance of our work is shown by utilizing TDA concepts in examining network properties. 

The quasi-centrality measure (Definition \ref{def: quasi-central nonno}) is defined based on the idea that if a node plays a crucial role in the connectivity of the network, we would expect perturbations originating from this node to propagate significantly through the network. Roughly, we measure a node's role in the overall connectivity of a directed network by computing the difference between the connectivity of the network before and after deleting this node. In particular, we use the ``size'' of homology groups as a proxy to determine the connectivity of a directed network (Remark \ref{def: rem}). 

 We show that the quasi-centrality is optimal for measuring the influence of perturbations originated from individual nodes by (1) computing the quasi-centralities on the star-shape network in Example \ref{ex: example} and comparing with existing centrality measures, and (2) analyzing the importance of country-industry pairs in the Asia machinery production network (Section 4) by computing the quasi-centralities and comparing with existing centrality measures.

Moreover, given a directed weighted network $G$, we introduce a method (Definition \ref{def: hierarc}) that extracts a hierarchical representation of nodes in $G$ based on their topological impacts. This method combines the bottleneck distance (Definition \ref{defn:bottleneck}), a tool used in TDA, with hierarchical clustering, a method used in cluster analysis that determines similarity between objects. 

We present our findings as follows: in Section 2, we provide background knowledge in networks and existing centrality measures, algebraic topology, and TDA (specifically persistent homology).
In Section 3, we define quasi-centrality (Definition \ref{def: quasi-central nonno}). 
In Section 4, we illustrate the practicality of  quasi-centrality by computing it on the Asia machinery production network. 
In Section 5, we present our method that determines the hierarchy of nodes in a given directed weighted network, and analyze the hierarchical dendrograms extracted from the  Asia machinery production networks. And finally in Section 6, we summarize our results and discuss future directions. 

\paragraph{Data and implementations:} Our data sets and software are available on \url{https://github.com/lindahe8989/A-topological-centrality-measure-for-directed-networks}. 

For plotting persistence diagrams and hierarchical dendrograms, we partially used \url{https://github.com/fmemoli/PersNet} for reference.

\section{Preliminary}
In this section, we provide background knowledge in algebraic topology and persistent homology, as well as related work in network analysis. We begin by introducing networks. 

\subsection{Networks} 
\begin{definition}
A \emph{network} is a pair $(X, w_X)$ where $X$ is a finite set and $w_X:X \times X \rightarrow \mathbb{R}$ is called the weight function. An \emph{undirected network} is a network such that $w_X(x_1,x_2)=w_X(x_2,x_1)$ for all $x_1, x_2 \in X$, and a \emph{directed network} is a network such that $w_X(x_1,x_2) \not = w_X(x_2,x_1)$ for some $x_1, x_2 \in X$. 
\end{definition}

Our definition of quasi-centrality (Definition \ref{def: quasi-central nonno}) is restricted to dissimilarity networks, introduced below: 

\begin{definition}
A \emph{dissimilarity network} is a network $G=(X, w_X)$ where $w_X: X \times X \rightarrow \mathbb{R}$ is called a \emph{dissimilarity function}, i.e. a map such that $w_X(x, x')=0$ if and only if $x=x'$ for all $x, x' \in X$. 
\end{definition}

\begin{remark}
A dissimilarity network can be mathematically represented as a directed graph without self-loops.
\end{remark}

Given a directed network $G=(X,m_X)$, the dissimilarity function $m_X(x_1, x_2)$ describes the intensity of interaction between the two nodes. We can associate a distance metric $w_X$ to $G$ such that if two nodes interact frequently, i.e., $m_X(x_1, x_2)>>0$, then they are close to each other in this distance metric, i.e., $w_X(x_1, x_2) \approx 0$. We adapt the definition of \emph{effective distance} \cite{brockmann_helbing_2013}\footnote{Successful applications include \cite{Iannelli_2017}, \cite{brockmann_helbing_2013}, etc.}, which has been proved to have successfully interpreted the relationship between distance and the volume of interaction between two nodes into our distance metric $m_X$ as follows: 

\begin{definition}
Let $G=(X, w_X)$ be a (dissimilarity) network. Define $\gamma(G)$ to be $(X, m_X)$ where $m_X: X \times X \rightarrow \mathbb{R}$ is given by: 

\[m(x_i, x_j)=\begin{cases}
1-\log \frac{w(x_i, x_j)}{\sum_{k \not = j}w(x_k, x_j)} \geq 1  \text{ if } i \not = j\\ 
0 \text{ if } i=j.
\end{cases} \]
\label{def: gamma weight}
\end{definition}

Hence in $\gamma(G)$, two vertices $x_i,x_j$ with large volume of interaction $w(x_i, x_j)$ will correspond to a closer distance $m(x_i, x_j)$. 

\begin{remark}
The notion of effective distance will be used heavily in our definition of quasi-centrality (Definition \ref{def: quasi-central nonno}). 
\end{remark}

\begin{remark}
For simplicity in our calculation, when $w(x_i,x_j)=0$, we choose a large number for $m(x_i,x_j)$, i.e., we let the distance between nodes $x_i, x_j$ in the new metric $m$ be very far apart so that it is almost indistinguishable when no relation exists between the two nodes. In Example \ref{ex: example}, we let $m(x_i,x_j)=24.026$. 
\end{remark}

\subsection{Existing centrality measures}
Given a network $G=(X, w_X)$, we can study the qualitative features of the network. One such interesting feature is to detect the importance of individual nodes in the network. In particular, \emph{node centrality} measures have been extensively applied on empirical networks to detect ranks and influence of individual nodes. 
Here we summarize existing measures of node centrality for directed networks. In Section 5 we will use these as a comparison with quasi-centrality (Definition \ref{def: quasi-central nonno}).

In the following definitions, we let $G=(X, w_X)$ be a directed network and $A_{ij}$ be its adjacency matrix.  

\begin{itemize}
    \item \emph{Degree centrality}: for a directed network  $G$, the \emph{in-degree centrality} and \emph{out-degree centrality} for a node $i \in X$ are defined to be the number of edges going in or out of $i$. 
    \item \emph{Katz centrality}: for a directed network  $G$, the  the Katz centrality $x_i$ for node $i \in X$ is defined to be $x_i=\alpha \sum_j A_{ij}x_j+\beta$, where $\alpha$ and $\beta$ are constants and $A_{ij}$ is an element of the adjacency matrix.
    \item \emph{Pagerank centrality}: for a directed network  $G$, the Pagerank centrality $x_i$ for node $i \in X$ is defined to be $x_i=\alpha \sum_j A_{ij}\frac{x_j}{k_j^{out}}+\beta$, where $\alpha$ and $\beta$ are constants, $A_{ij}$ is an element of the adjacency matrix, and $k_j^{out}$ is the out-degree of the node $i$. 
    \item \emph{HITS hubs and authorities centrality}: The HITS hubs and the HITS authorities centrality measures are defined such that one measure depends on the centrality measure determined by the other measure.

    Let $\alpha$ and $\beta$ be non-negative constants. For a directed network  $G$, the HITS-authority centrality $x_i$ for node $i \in X$ is defined to be the sum of the hub centralities $y_j$ which point to the node $i$: $x_i=\alpha \sum_jA_{ij}y_j$, and the HITS-hubs centrality $y_i$ for node $i \in X$ is defined to be the sum of the authority centrality $x_j$ which are pointed by the node $i$: $y_i=\beta \sum_jA_{ji}x_j$.
\end{itemize}

We note that all of the above centrality measures are based on the adjacency matrix $A_{ij}$, showing that the centrality measure for a node $i$ is completely dependent on the direct neighbors of the node, while not taking the propagating influence of the node on the entire network into account. The goal of our research is to provide a centrality measure that takes propagating influences into account, and we do so by using algebraic topology and persistent homology, which we will provide background knowledge in the following section.

\subsection{Algebraic topology and homology}
We begin by introducing simplicial complexes, which are the foundational objects of study in algebraic topology and homology.  

\begin{definition}
Given a finite set $ T $, an \emph{(abstract) simplicial complex} $ X $ is a subset of $Pow(T)$, the power set on $ T $,  such that 
\begin{itemize}
    \item singletons $\{x\} \subseteq T$ belong to $ X$, and
    \item whenever $\sigma \in Pow(T)$ belongs to $X$, then any subset $\tau \subseteq \sigma$ belongs to $ X $ as well. 
    We refer to a proper subset $\tau$ of  $\sigma$ is as a \emph{face} of $\sigma$. 
\end{itemize}  
\end{definition}

The elements of $T$ are called the \emph{vertices} of $ X $, the two-element subsets of $ T $ that belong to $ X $ are called the \emph{edges}, and in general the $k$-element subsets of $T$ are called the ($k$-1)-simplices in $ X $. 

Given a simplex $ \sigma = \{x_0, \ldots , x_n\} $, an \emph{orientation} of $ \sigma$ is a choice of an equivalence class of the orderings of its vertices. We say that two orderings are \emph{equivalent} if they differ by an even permutation. For $n > 0$, there are thus two equivalence classes of orderings.  
A choice of one of the classes will be called an orientation of $\sigma$. If $\sigma = \{v_0, ..., v_n\}$, then $\sigma$ with a choice of an orientation will be denoted by $\sigma = [v_0: ...: v_n]$.  

We now define homology for simplicial complexes. Let $X$ be a simplicial complex, and let $X_n$ denote the set of oriented $n$-simplices in $X$, then for any dimension $n \in \mathbb{Z}_{+}$, the \emph{vector space of $n$-chains} of $X$ over the field $\mathbb{K}$, denoted by $C_n(X, \mathbb{K})$, is the free $\mathbb{K}$-vector space with basis the set of oriented $n$-simplices: 
\[C_n(X,\mathbb{K})=\left \{ \sum \alpha_i\sigma_i:  \sigma_i \in X_n, \alpha_i \in \mathbb{K}\right \}. \]

\begin{definition}
Given any $n$-chain vector space $C_n(X,\mathbb{K})$, where $n \in \mathbb{Z}_{+}$, we define the boundary map $\partial_n: C_n(X,\mathbb{K}) \rightarrow C_{n-1}(X,\mathbb{K})$ as the the linear transformation specified on the generators as
\[\partial_n([v_0,...,v_n])=\sum_{i=1}^{n} (-1)^i[v_0,...,v_{i-1},v_{i+1},...,v_n].\]
\end{definition}

Informally, $\partial_n$ maps each $(n+1)$-simplex to its boundary comprised of its faces. It is easy to check that the boundary of a boundary is empty, and hence the linear maps $\partial_n$ satisfy the property that composing any two consecutive boundary maps yields the zero map: 

\begin{prop}\cite{Munkers84}
\label{prop: chain linear transform}
For all $n \in \mathbb{Z}_{+}$, $\partial_{n-1} \circ \partial_{n}=0$. 
\end{prop}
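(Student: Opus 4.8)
The plan is to reduce to a computation on a single oriented simplex and then do a sign-bookkeeping argument. Since $\partial_{n-1}$ and $\partial_n$ are linear and the oriented $n$-simplices form a basis of $C_n(X,\mathbb{K})$, it suffices to check $\partial_{n-1}\bigl(\partial_n(\sigma)\bigr)=0$ for a single generator $\sigma=[v_0:\cdots:v_n]$. First I would expand $\partial_n(\sigma)$ as the alternating sum of its codimension-one faces $[v_0:\cdots:\widehat{v_j}:\cdots:v_n]$, where the hat denotes omission of $v_j$, and then apply $\partial_{n-1}$ to each such face. This yields a double sum whose terms are, up to sign, the codimension-two faces $[v_0:\cdots:\widehat{v_i}:\cdots:\widehat{v_j}:\cdots:v_n]$ obtained by deleting two vertices from $\sigma$; the whole point is that these terms cancel in pairs.

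The key combinatorial step is to organize this double sum according to the relative order of the two deleted indices. Fix a codimension-two face in which the vertices $v_p$ and $v_q$ with $p<q$ are removed. It arises in exactly two ways: (i) delete $v_q$ at the outer stage and then $v_p$ at the inner stage, and (ii) delete $v_p$ at the outer stage and then $v_q$ at the inner stage. In case (i) the intermediate simplex still has $v_p$ in position $p$, so the inner application of $\partial$ contributes a sign with exponent $p$; in case (ii) deleting $v_p$ first shifts $v_q$ down to position $q-1$, so the inner application contributes a sign with exponent $q-1$. Combining with the outer signs (exponents $q$ and $p$ respectively), the two contributions to the same face carry total sign exponents $q+p$ and $p+(q-1)$, which differ by one; hence they are negatives of one another and cancel.

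Concretely, I would split the double sum into the part with (inner index) $<$ (outer index) and the part with (inner index) $>$ (outer index), reindex the first part so that both are indexed by pairs $(p,q)$ with $p<q$ over the same collection of codimension-two faces, and observe that the resulting two sums are identical except for an overall factor of $-1$. Adding them gives zero, and since $\sigma$ was an arbitrary generator, linearity extends this to all of $C_n(X,\mathbb{K})$, proving $\partial_{n-1}\circ\partial_n=0$.

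The main (and essentially only) obstacle is getting the index shift right in the second application of the boundary map: one must carefully track how deleting an earlier vertex renumbers the later vertices and therefore changes the parity of the sign exponent. Everything else is a routine rearrangement of a finite sum. It is also worth spot-checking the short cases, such as $n=1$ and the low-dimensional chains, where the sums have few terms but the stated sign convention for $\partial_n$ must still be applied consistently.
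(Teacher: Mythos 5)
Your argument is correct and is exactly the standard double-sum cancellation proof that the paper defers to Munkres: expand on a generator, index the codimension-two faces by the pair of deleted vertices $p<q$, and note that the two ways of reaching each face carry sign exponents $p+q$ and $p+(q-1)$, so they cancel. The only caveat is that the paper's displayed formula for $\partial_n$ starts its sum at $i=1$ rather than $i=0$ (almost certainly a typo), so your phrase ``alternating sum of its codimension-one faces'' silently corrects this; reassuringly, the same pairing argument also works for the truncated sum, since then $v_0$ is simply never deleted at either stage.
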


We axiomatize the properties of $C_n(X; \mathbb{K})$ in the definition of a chain complex. Namely, a \emph{chain complex of vector spaces} $A_{\bullet}$ is
a collection of vector spaces $\{ A_n\}$ such that 
the linear transformations $\partial_n: A_n \rightarrow A_{n-1}$ satisfy $\partial_{n-1} \circ \partial_{n}=0$ for all $n \in \mathbb{Z}_+$.
From the chain complex $A_{\bullet}$, one may define the following two subspaces: 
\[\emph{k-cycles}: Z_k(A_{\bullet}):= \ker(\partial_k)=\{a \in A_k: \partial_ka=0 \}, \] 
\[\emph{k-boundaries}: B_k(A_{\bullet}):= \im(\partial_k)=\{a \in A_k: a= \partial_{k+1}b\}. \]

\begin{remark}
By Proposition \ref{prop: chain linear transform}, the image of $\partial_{n+1}$ is contained in the kernel of $\partial_{n}$, so we can take the quotient of kernel($\partial_{n}$) by the image($\partial_{n+1}$).  
\end{remark}
 
\begin{definition}
For a chain complex $A_{\bullet}$, the $n$th \emph{homology group} $H_n(A_{\bullet})$ is defined as the quotient
\[H_n(A_{\bullet})=\text{ker}(\partial_n)/\text{im}(\partial_{n+1}).\]
\end{definition}
Roughly, we can think of homology groups as the subspace of homotopy classes of cycles in $Z_k(A_{\bullet})$ that are not the boundaries of elements of $B_{k+1}(A_{\bullet})$. 
The quotient vector space $H_n(A_{\bullet})$ is called the \emph{k-th homology of the chain complex} $A_{\bullet}$.

\begin{definition}
The dimension of $H_n(A_{\bullet})$ over $ \mathbb{K}$ is called the \emph{$n$-th Betti number of} $A_{\bullet}$, denoted by $\beta_n(A_{\bullet})$.
\end{definition}

\begin{example} 
Suppose $X$ is the simplicial complex homeomorphic to the boundary of a triangle with vertices $v_0, v_1$, and $v_2$. Then $C_0(X,\mathbb{K}) = \mathbb{K}^3$, generated by the vertices $[v_1], [v_2], [v_3]$, and $C_1(X, \mathbb{K}) = \mathbb{K}^3$, generated by the edges $[v_0, v_1], [v_1, v_2], [v_0, v_2]$, and  $C_n(X, \mathbb{K}) = 0$ for all $n > 1$. Then
\[H_0(X) = \frac{Z_0(X)}{B_0(K)}= \frac{\langle [v_0], [v_1], [v_2] \rangle}{\langle [v_0]-[v_1], [v_0]-[v_2] \rangle}=\mathbb{K}, \]
\[H_1(X) = \frac{Z_1(X)}{B_1(K)}= \frac{\langle [v_0, v_1] - [v_0, v_2] + [v_0, v_2] \rangle}{\langle 0 \rangle}=\mathbb{K}. \]
 
All other homology groups are trivial because $ C_n(X; \mathbb{K}) = 0 $ for $ n \geq 2$. The \emph{geometric realization}\footnote{See Section 1.8 in \cite{rabadan.raul.blumberg.20}: vaguely, the geometric realization $|X|$ of a simplicial complex $X$ is the operation that interprets each algebraic $n$-simplex as a standard topological $n$-simplex gluding together in a ``nice'' way. } $|X|$ of this simplicial complex $X$ is homotopic\footnote{See Section 1.8 in \cite{rabadan.raul.blumberg.20}: vaguely, two topological spaces $X$ and $Y$ are homotopic if one can be continuously deformed into the other.} to $S^1$, so this example computes the homology of $S^1$. 

Note that the dimension $ H_0 (S^1)$ corresponds to the number of path components of $S^1$, and the dimension of $H_1 (S^1) $ corresponds to the number of 1-dimensional cycles in $S^1$. 
In general, if we have a $n$-dimensional sphere $S^n$, then $H_n (S^n)=\mathbb{K} $ and  $H_0 (S^n)=\mathbb{K} $ and all other homology groups are trivial. 
\end{example}

In fact, the homology groups have a very nice interpretation: for all simplicial complexes $ X $, $ H_0(X) $ is the free $\mathbb{K}$-vector space on the set of \emph{0-dimensional cycles} of $X$ modulo boundaries. i.e., two vertices are equivalent if there exists a sequence of edges between them, which can be visualized as \emph{path components} in $|X|$; $H_1(X)$ is the free $\mathbb{K}$-vector space on the set of \emph{1-dimensional cycles} of $X$ modulo boundaries, which can be visualized as \emph{1-dimensional holes} in $|X|$; more generally, $H_k(X)$ is the free $\mathbb{K}$-vector space on the set of \emph{k-dimensional cycles} of $X$ modulo boundaries, and  can be visualized as \emph{k-dimensional holes} in $|X|$.

\begin{fact}
Let $X$ and $Y$ be topological spaces\footnote{In fact, we can define homology for ``sufficiently nice" topological spaces, not just abstract simplicial complexes.}, then a continuous map $ f: X \to Y $ induces a $ \mathbb{K}$-linear map $ H(f): H_*(X; \mathbb{K}) \to H_*(Y; \mathbb{K}) $. 
\end{fact}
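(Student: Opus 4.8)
The plan is to reduce the statement to the standard fact that an appropriate map of spaces induces a \emph{chain map} of the associated chain complexes, and that chain maps automatically descend to homology. I would first treat the combinatorial case, where $X$ and $Y$ are geometric realizations of simplicial complexes and $f$ comes from a \emph{simplicial map} $\varphi$ — a function on vertex sets carrying the vertices of each simplex of $X$ onto the vertices of a simplex of $Y$. Define $\varphi_\# : C_n(X;\mathbb{K}) \to C_n(Y;\mathbb{K})$ on an oriented generator by $\varphi_\#([v_0 : \cdots : v_n]) = [\varphi(v_0) : \cdots : \varphi(v_n)]$ when $\varphi(v_0), \ldots, \varphi(v_n)$ are pairwise distinct and $\varphi_\#([v_0 : \cdots : v_n]) = 0$ otherwise, then extend $\mathbb{K}$-linearly; the vanishing clause records that a simplicial map may collapse a simplex onto a lower-dimensional one. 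With this definition, $\mathbb{K}$-linearity of the induced map is automatic.

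The crux is to check that $\varphi_\#$ commutes with the boundary operators, i.e. $\partial_n \circ \varphi_\# = \varphi_\# \circ \partial_n$. Expanding both sides on a generator $[v_0 : \cdots : v_n]$ via the alternating sum of faces, the two sides agree term by term whenever $\varphi$ is injective on $\{v_0, \ldots, v_n\}$; the real work is the bookkeeping when two vertices $v_i, v_j$ with $i < j$ satisfy $\varphi(v_i) = \varphi(v_j)$. Then $\varphi_\#([v_0 : \cdots : v_n]) = 0$, so the left-hand side vanishes, while on the right-hand side every boundary term that still contains the repeated image dies, leaving only the two faces obtained by deleting $v_i$ or deleting $v_j$; one verifies these two simplices coincide as unoriented simplices but pick up opposite signs $(-1)^i$ and $(-1)^j$ once the intervening transposition is accounted for, so they cancel. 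This sign cancellation is the one genuinely delicate point, and I expect it to be the main obstacle.

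Once $\varphi_\#$ is known to be a chain map the rest is formal: a chain map sends $\ker \partial_n$ into $\ker \partial_n$ and $\im \partial_{n+1}$ into $\im \partial_{n+1}$, i.e. cycles to cycles and boundaries to boundaries, hence induces a well-defined $\mathbb{K}$-linear map on the quotients $H_n(X) \to H_n(Y)$; we set $H(f) := (\varphi_\#)_*$. The identities $(\psi \circ \varphi)_\# = \psi_\# \circ \varphi_\#$ and $(\mathrm{id})_\# = \mathrm{id}$ hold at the level of chains, so $H$ is in fact a functor.

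Finally, to cover an arbitrary continuous map $f : |X| \to |Y|$, which need not be simplicial, I would indicate the two standard remedies. One option is the Simplicial Approximation Theorem: after passing to a sufficiently fine barycentric subdivision of $X$, $f$ is homotopic to a simplicial map $\varphi$, and since homotopic maps induce the same map on homology, $H(f) := (\varphi_\#)_*$ is well defined and independent of the choices. The cleaner option, and the one I would adopt for a general topological space, is to use \emph{singular} homology instead: a continuous $f$ sends a singular simplex $\sigma : \Delta^n \to X$ to $f \circ \sigma : \Delta^n \to Y$, and because the singular boundary is assembled from the face inclusions $\Delta^{n-1} \hookrightarrow \Delta^n$, post-composition with $f$ commutes with $\partial$ with no case analysis at all — the chain-map check becomes a one-line verification — while for simplicial complexes singular and simplicial homology agree. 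Either way one obtains the asserted natural $\mathbb{K}$-linear map $H(f) : H_*(X;\mathbb{K}) \to H_*(Y;\mathbb{K})$.
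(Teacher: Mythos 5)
Your proposal is correct. Note that the paper itself offers no proof of this Fact: it is quoted as standard background from algebraic topology (functoriality of homology), so there is no argument in the paper to compare yours against. Your route is the standard textbook one, and the one delicate step you flag --- the cancellation of the two surviving faces with signs $(-1)^i$ and $(-1)^j$ when $\varphi(v_i)=\varphi(v_j)$ --- works out exactly as you describe, since reordering the repeated vertex costs $(-1)^{j-i-1}$ and the total contribution is $(-1)^i + (-1)^j(-1)^{j-i-1} = 0$. Your closing observation is also the right one for the statement as literally phrased: since the Fact is asserted for arbitrary topological spaces (not just realizations of simplicial complexes), the singular-homology construction, where $H(f)$ comes from post-composition $\sigma \mapsto f\circ\sigma$ and the chain-map identity is immediate, is the argument that actually covers the stated generality; the simplicial-approximation route only handles maps between polyhedra. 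Either way the induced map is $\mathbb{K}$-linear and functorial, which is all the paper uses.
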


\begin{remark}
Let $X$ and $Y$ be topological spaces. 
If $ f, g: X \to Y $ are homotopic, i.e., they can be continuously deformed into each other, then they induce the same $\mathbb{K}$-linear homomorphisms on the respective homology groups. 
\end{remark}

\begin{remark}
If two topological spaces $X,Y$ are homotopy equivalent\footnote{Two topological spaces $X$ and $Y$ are homotopy equivalent if there exists a pair of continuous maps $f: X \rightarrow Y$ and $g: Y \rightarrow X$ such that $f \circ g$ is homotopic to $\text{id}_Y$ and  $g \circ f$ is homotopic to $\text{id}_X$.}, then $X$ and $Y$ have isomorphic homology groups. 
\end{remark}

\subsection{Persistent homology} 
Computing the homology of finite simplicial complexes can be simplified to computing linear algebra. However, most empirical data sets do not exist in the form of simplicial complexes. Here we introduce the method in which we build simplicial complexes from directed networks.

Given a directed network $G$, one can induce a sequence of  simplicial complexes $\mathcal{D}_{\delta, G}$ for a sequence of values of parameter $\delta \in \mathbb{R}_+$ such that $\mathcal{D}_{\delta=m, G} \subseteq \mathcal{D}_{\delta=n, G}$ for all $m \leq n$, where $\mathcal{D}_{\delta, G}$ is called the Dowker sink complex (Definition \ref{def: dowker}), introduced first in \cite{chowdhury_mémoli_2018}: 

\begin{definition} 
Given a network $G=(X, w_X)$ and $\delta \in \mathbb{R}$, define $R_{\delta, G} \subseteq X \times X$ as:

\[R_{\delta, G} := \{(x, x'): w_X(x, x') \leq \delta \}. \]

Using $R_{\delta, G}$, we build a simplicial complex $\mathcal{D}_{\delta, G}$ called the \emph{Dowker sink complex} as:

\[\mathcal{D}_{\delta, G}:= \{ \sigma \in Pow(X): \text{ there exists } p \in X \text{ s.t. } (x_i,p) \in R_{\delta, G} \text{ for each } x_i\in  \sigma  \}. \]

The node $p$ is called the \emph{sink} for the simplex $ \sigma$, and one can check that $\mathcal{D}_{\delta, G} \subseteq \mathcal{D}_{\delta', G}$ for all $\delta \leq \delta'$.
\label{def: dowker} 
\end{definition}

\begin{remark}
Let $G=(X, w_X)$ be a dissimilarity network, then $R_{\delta=0, G}$ is precisely the set of $|X|$ vertices, i.e., 0-simplices. 
\end{remark}

We note that the Dowker sink complex and the \emph{Vietoris-Rips complex} \cite{otter_porter_tillmann_grindrod_harrington_2017}, one of the most prevalent tools used in TDA, are analogous. While the Vietoris-Rips complex induces a filtration of simplicial complexes from a point cloud data set, the Dowker sink complex induces a filtration of simplicial complexes from a directed network. 
 
\begin{definition}
A \emph{filtration of simplicial complexes} is a set of simplicial complexes $X_1, X_2, ..., X_n$ such that each simplicial complex is contained in its successive simplical complex: 
\[ X=X_1 \subseteq X_2 \subseteq  ... \subseteq  X_n.\]
$X$ is called a \emph{filtered simplicial complex}.
\end{definition}

\begin{example}
Let $G=(X, w_X)$ be a directed network in a finite metric space, and let the simplicial complexes $\mathcal{D}_{\delta_1, G}$, $\mathcal{D}_{\delta_2, G}$, ..., $\mathcal{D}_{\delta_n, G}$ be the Dowker sink complexes at $\delta_1 \leq \delta_2 \leq  ... \leq \delta_n$ respectively, then the sequence 

\[ \mathcal{D}_{\delta_1, G} \subseteq \mathcal{D}_{\delta_2, G} \subseteq \mathcal{D}_{\delta_3, G} \subseteq ... \subseteq \mathcal{D}_{\delta_n, G}=\mathcal{D}_{G}\] 
is a filtration of simplicial complexes, and $\mathcal{D}_{G}$ is a filtered simplicial complex.
\end{example}

One can compute the homology groups of the simplicial complex $\mathcal{D}_{\delta, G}$ for a given value of $\delta$. 
As the value of $\delta$ increases, one can record the birth (the value of $\delta$ at which a homological feature appears) and death (the value of $\delta$ at which a homological feature disappears) values for all homological features (path components, 1-cycles, etc.) that appear in $\mathcal{D}_{G}$. 

\begin{remark}
The definitions of birth, death values, and a persistence barcode of a homological feature in a filtered simplicial complex are presented in Section 4 of \cite{otter_porter_tillmann_grindrod_harrington_2017}. We will omit these definitions since they are not necessary for the work presented in this paper. 
\end{remark}

Thus, given a filtered simplicial complex $X$, we can associate $X$ a union of \emph{persistence barcodes}, each representing the birth and death values of a homological feature of $X$. Persistence barcodes record changes in the $n$-dimensional homology group $H_n$ for some $n \in \mathbb{Z}_{\geq 0}$ when one simplicial complex $X_i$ is included in its successor $X_{i+1}$ in the filtration. In particular, each persistence barcode $[b,d)$ represents the lifetime of a generator of a homology group $H_n$, where $\delta=b$ is the birth value of the feature and $\delta=d$ is the death value of the feature in the filtered simplicial complex $X$. The value $d-b$ is called the \emph{persistence} of the feature.
 
\begin{definition}
A \emph{persistence diagram} $\textbf{D}$ is a union of a finite set of points above the diagonal $D = \{(x, y) \in \mathbb{R}^2 | x = y\}$ in ${\mathbb{R}_{\geq 0}}^2$ and the entire diagonal $D$. 
\end{definition}

\begin{remark}
Each point $(x,y) \in \textbf{D}$ represents the birth and death $\delta$ values of a persistent barcode $[b,d)$ with $x=b$ and $y=d$. 
\end{remark}

\begin{remark}
Roughly, a persistence diagram inherits the same data as the persistence barcodes. However, the persistence diagram $\textbf{D}$ contains the diagonal $D$, which will be useful for comparing the similarity of data sets. Specifically, a matching between two persistence diagrams map each point of a persistence diagram to another point of the other persistence diagram or to the diagonal, For more concrete details, see Section 5.1. 
\end{remark}

Generally, given any dissimilarity network $G(X, w_X)$, we denote $\Dgm_{n}(G)$ as the $n$-dimensional persistence diagram for the filtered simplicial complex $\mathcal{D}_{G}$. Moreover, we denote  \[\textbf{P}_n(G):= \{[b_1, d_1),[b_2, d_2),...,[b_k, d_k) \}\]  to be the set of n-dimensional persistence barcodes for the filtered simplicial complex $\mathcal{D}_{G}$.

\section{Quasi-centrality measure}

In this section, we present our definition of \emph{quasi-centrality} for a directed dissimilarity network based on persistent homology and the Dowker sink complex, which is given in Definition \ref{def: quasi-central nonno}.

\begin{definition}
\label{def: function f gamma}
Let $G=(X, w_X)$ be a (dissimilarity) network and let $x \in X$, define ${ f(G, x)=(X\setminus\{x\}, w_X)} $ to be the sub-network induced by deleting $x$ and all edges incident to $x$ in $G$.
\end{definition}  

\begin{remark}
We note that for a trade network $G=(X,w_X)$, where $w_X(x_1,x_2)$ represents the trade volume between individual countries or industries $x_1, x_2 \in X$, the induced network ${f(G,x)}$ models an embargo trade barrier\footnote{For our purposes, an embargo on $ x $ refers to the situation in which all other countries/nodes refuse to trade with $x$.} on $x$. 
\end{remark}

\begin{definition}
Let $G=(X, w_X)$ be a (dissimilarity)  network and $\gamma(G)=(X, m_X)$ be defined as in Definition \ref{def: gamma weight}. Given, $x \in X$, define $\mu(x)$ to be the minimum distance between $x$ and any other node $x' \in X$ with respect to the metric $m_X$.  
\label{def: muu}
\end{definition}

\begin{definition}
Let $G=(X, w_X)$ be a (dissimilarity) network and let $x \in X$. We define the \emph{quasi-centrality} $C(x)$ as follows: 

\[C(x)=\sum_{b \in \textbf{P}_0(f(\gamma(G),x))} l(b) -\sum_{b \in \textbf{P}_0(\gamma(G))} l(b) + \mu(x) \]
where $b$ represents a persistence interval in $\textbf{P}_0$ and $l(b)$ is the persistence of the interval, i.e., the length of the interval, and $\gamma(G)$ is given in Definition \ref{def: gamma weight} and $\mu(x)$ is given in Definition  \ref{def: muu}. 
\label{def: quasi-central nonno}
\end{definition}

\begin{remark}

We first give an intuitive explanation of what $C(x)$ measures before proving that $C(x)$ is nonnegative for all $x \in X$. We note that $\gamma(G)$ not only can be treated as a topological space where $\mathcal{D}_{\gamma(G)}$ encompasses the data of its homological features, but it also can be treated as a metric space since it comprises the data of the pairwise distances between nodes in the network. 

Roughly speaking, we think of the sum $\sum_{b \in \textbf{P}_0(\gamma(G))} l(b)$ as a measure of how disconnected $\gamma(G)$ is, i.e., if $\sum$ is large, then $\gamma(G)$ is considered to be highly disconnected and vice versa. In particular, since the dimension of the 0-th homology group of a simplicial complex $X$ measures the number of path components (disconnected components) in $|X|$, we consider $\sum_{b \in \textbf{P}_0(\gamma(G))} l(b)$ to be the ``size" of the 0-th homology group of $\gamma(G)$,
and $\gamma(G)$ is considered to be highly disconnected if the sum $\sum$ is large and vice versa. 
 Similarly, the sum $\sum_{b \in \textbf{P}_0(f(\gamma(G),x))} l(b)$ as a measure of how disconnected $f(\gamma(G),x)$ is. We think of the difference as measuring of how much node $x$ contributes to the overall connectivity of the network. 

Intuitively, a network should become `more disconnected' after deleting a node, which is made precise by the following theorem. 
\label{def: rem}
\end{remark}
 
\begin{thm}
For a (dissimilarity) network $G=(X, w_X)$, $C(x)$ is nonnegative for all $x \in X$.
\end{thm}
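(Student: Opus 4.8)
The plan is to turn each barcode-length sum into an integral of a Betti number over the filtration parameter $\delta$, reduce the theorem to a pointwise inequality between the Betti numbers of $\mathcal{D}_{\delta,\gamma(G)}$ and $\mathcal{D}_{\delta,f(\gamma(G),x)}$, and then observe that $\mu(x)$ exactly compensates for the single range of $\delta$ on which that inequality can point the wrong way.

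First I would record a ``layer cake'' formula for the sums appearing in Definition~\ref{def: quasi-central nonno}. For any dissimilarity network $(Y,m)$ one has $\mathcal{D}_{0,(Y,m)}=\{\text{singletons}\}$ (as noted after Definition~\ref{def: dowker}), so every $0$-dimensional class of the Dowker filtration is born at $\delta=0$; and once $\delta\ge\max_{y,y'\in Y}m(y,y')$ we get $R_{\delta,(Y,m)}=Y\times Y$, so $\mathcal{D}_{\delta,(Y,m)}$ is the full simplex on $Y$ and is connected. Hence $\textbf{P}_0((Y,m))$ has one essential barcode together with finite barcodes $[0,d_1),\dots,[0,d_{|Y|-1})$, and the number of finite barcodes alive at $\delta$ is exactly $\beta_0(\mathcal{D}_{\delta,(Y,m)})-1$. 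Writing $d_i=\int_0^\infty\mathbf{1}\{\delta<d_i\}\,d\delta$ and summing gives $\sum_i d_i=\int_0^\infty\bigl(\beta_0(\mathcal{D}_{\delta,(Y,m)})-1\bigr)\,d\delta$. Applying this with $(Y,m)=f(\gamma(G),x)$ and with $(Y,m)=\gamma(G)$, the unique essential barcode contributes the same persistence to each of the two sums in $C(x)$ (or is omitted from $\textbf{P}_0$ by convention), so it cancels, and the two $``-1"$ terms cancel as well, leaving
\[
C(x)\;=\;\int_0^\infty\Bigl(\beta_0\bigl(\mathcal{D}_{\delta,f(\gamma(G),x)}\bigr)-\beta_0\bigl(\mathcal{D}_{\delta,\gamma(G)}\bigr)\Bigr)\,d\delta\;+\;\mu(x),
\]
a finite integral since the integrand vanishes for $\delta$ large.

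Next I would prove the pointwise bound $\beta_0(\mathcal{D}_{\delta,f(\gamma(G),x)})\ge\beta_0(\mathcal{D}_{\delta,\gamma(G)})-1$ for every $\delta\ge0$ and identify when equality holds. Let $H_\delta$ be the full subcomplex of $\mathcal{D}_{\delta,\gamma(G)}$ on the vertex set $X\setminus\{x\}$. A sink in $X\setminus\{x\}$ for a simplex $\sigma\subseteq X\setminus\{x\}$ is also a sink for $\sigma$ in $\gamma(G)$, so $\mathcal{D}_{\delta,f(\gamma(G),x)}\subseteq H_\delta$ as subcomplexes with the common vertex set $X\setminus\{x\}$; removing simplices without removing vertices cannot decrease the number of connected components, whence $\beta_0(\mathcal{D}_{\delta,f(\gamma(G),x)})\ge\beta_0(H_\delta)$. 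On the other hand $H_\delta$ is $\mathcal{D}_{\delta,\gamma(G)}$ with the single vertex $x$ (and its incident simplices) deleted, and deleting one vertex from a graph lowers the component count by at most $1$, with a drop of exactly $1$ only if that vertex was isolated. Chaining these, the integrand $\beta_0(\mathcal{D}_{\delta,f(\gamma(G),x)})-\beta_0(\mathcal{D}_{\delta,\gamma(G)})$ is an integer that is always $\ge-1$, and it equals $-1$ only when $x$ is an isolated vertex of $\mathcal{D}_{\delta,\gamma(G)}$.

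Finally I would let $\mu(x)$ do its job. If $\delta\ge\mu(x)$, pick $x'\ne x$ with $m_X(x,x')=\mu(x)$; then $\{x,x'\}$ lies in $\mathcal{D}_{\delta,\gamma(G)}$ with sink $x'$, since $m_X(x,x')\le\delta$ and $m_X(x',x')=0\le\delta$, so $x$ is not isolated and the integrand is $\ge0$ there. Hence the integrand is $\ge-1$ on $[0,\mu(x))$ and $\ge0$ on $[\mu(x),\infty)$, so $\int_0^\infty(\cdots)\,d\delta\ge-\mu(x)$ and $C(x)\ge-\mu(x)+\mu(x)=0$. I expect the main obstacle to be the bookkeeping in the pointwise bound: $\mathcal{D}_{\delta,f(\gamma(G),x)}$ is genuinely smaller than the induced subcomplex of $\mathcal{D}_{\delta,\gamma(G)}$ on $X\setminus\{x\}$, because deleting $x$ also destroys every simplex whose only available sink was $x$; interposing the intermediate complex $H_\delta$ is what cleanly separates ``losing the vertex $x$'' (which costs at most one component, and only for $\delta<\mu(x)$) from ``losing the $x$-sunk simplices'' (which can only create components). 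The only other delicate point is the treatment of the essential $H_0$-barcode, handled above by observing that it enters the two sums identically.
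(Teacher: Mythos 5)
Your proposal is correct, and it reaches the conclusion by a genuinely different route than the paper. The paper works barcode-by-barcode: each finite death $d_j$ of $\textbf{P}_0(\gamma(G))$ is a merging event witnessed by a common sink $p$ for vertices $x_1,x_2$, and the paper argues that the ``corresponding'' barcode of $\textbf{P}_0(f(\gamma(G),x))$ dies no earlier, whether or not $x\in\{x_1,x_2,p\}$; the one barcode of the original network that has no counterpart after deleting $x$ is then absorbed by the $\mu(x)$ term, though the paper leaves both the matching of barcodes between the two filtrations and the exact role of $\mu(x)$ largely implicit. You instead convert both sums into $\int_0^\infty(\beta_0-1)\,d\delta$, reduce everything to the pointwise bound $\beta_0(\mathcal{D}_{\delta,f(\gamma(G),x)})\ge\beta_0(\mathcal{D}_{\delta,\gamma(G)})-1$, and prove that bound by interposing the induced subcomplex $H_\delta$ on $X\setminus\{x\}$, which cleanly separates the loss of the vertex $x$ (costing at most one component, and only while $x$ is isolated) from the loss of simplices whose only sink was $x$ (which can only increase $\beta_0$); the observation that $x$ ceases to be isolated once $\delta\ge\mu(x)$ (via the sink $x'$ realizing $\mu(x)$, using $m_X(x',x')=0$) then shows the exceptional region has measure at most $\mu(x)$. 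What your approach buys is precision at exactly the two points where the paper is informal—no ad hoc correspondence between barcodes of two different filtrations is needed, and the sufficiency of $\mu(x)$ is made explicit rather than asserted—at the modest cost of the layer-cake bookkeeping for the essential class, which you handle correctly by noting it enters both sums identically (or is excluded by convention).
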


\begin{proof}
Recall that each $[0,d_j) \in \textbf{P}_0(\gamma(G))$ represents the lifetime of a path component in $\gamma(G)$, hence by definition one connected component $\alpha$ joins another connected component $\beta$ at $\delta=d_j$. 
Therefore, at least two $0-$simplices $[x_1] \in \alpha$ and $[x_2] \in \beta$ belong to a 1-simplex $\pm [x_1:x_2]$ at $\delta=d_j$. 
By definition of the Dowker sink complex, there exists $p \in X$ such that $m(x_1, p) \leq d_j$ and $m(x_2, p) \leq d_j$.

If $x \in \{x_1,x_2,p\} $, then the 0-th persistence barcode in $\textbf{P}_0(f(\gamma(G), x)))$ representing the lifetime of path component $\alpha$ or $\beta$ will have a death resolution $\delta$ greater than or equal to $d_j$. 
If $ {x \not \in \{x_1,x_2,p\}} $, then the 0-th persistence barcode in $\textbf{P}_0(f(\gamma(G), x)))$ representing the lifetime of path component $\alpha$ or $\beta$ is still equal to $d_j$.

Hence all persistence barcodes representing the lifetime of connected components have death values in $\textbf{P}_0(\gamma(G))$ greater than or equal to the death values of corresponding components in $\textbf{P}_0(f(\gamma(G), x)))$. Therefore 
\[\sum_{b \in \textbf{P}_0(f(\gamma(G),x))} l(b)+\mu(x) \geq \sum_{b \in \textbf{P}_0(\gamma(G))} l(b),\]
and our proof is complete. 
\end{proof}

\begin{remark}
As the value of $\delta$ increases, $C(x)$ measures how often node $x$ bridges between two disconnected components. 
We think of the node $x$ as serving as a $\delta$-sink not only for a pair of nodes in a neighborhood of $x$, but rather the pair of connected components the nodes belong to.
In this way, the quasi-centrality $C(x)$ takes into account not only the local neighborhood of node $x$, but also propagating effects in said neighborhood.
\end{remark}

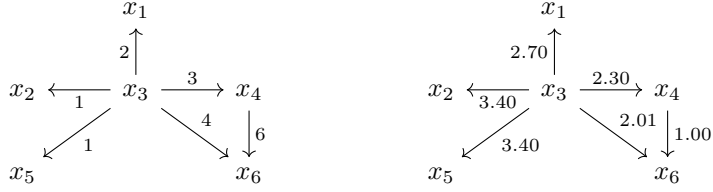
\begin{figure}
    \centering
    \hspace{-1cm}
    \begin{minipage}[b]{0.3\textwidth}
        \centering
         \begin{tikzcd}
    & x_1 \\
    x_2  & x_3 \arrow[u, "2"] \arrow[ld, "1"]  \arrow[rd, "4"] \arrow[l, "1"] \arrow[r, "3"] & x_4 \arrow[d, "6"] \\
    x_5  & & x_6
\end{tikzcd}
    \end{minipage}
    \hspace{1cm}
    \begin{minipage}[b]{0.3\textwidth}
        \centering
        \begin{tikzcd}
    & x_1 \\
    x_2  & x_3 \arrow[u, "2.70"] \arrow[ld, "3.40"]  \arrow[rd,   "2.01"] \arrow[l, "3.40"] \arrow[r, "2.30"] & x_4 \arrow[d, "1.00"] \\
    x_5  & & x_6
\end{tikzcd}
    \end{minipage}
    \hfill
    \caption{Left: $G=(X,w_X)$. Right: $\gamma(G)=(X,m_X)$. }
    \label{fig: example 1}
\end{figure}


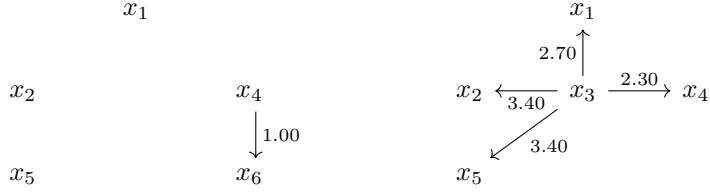
\begin{figure}
    \centering
    \hspace{-1cm}
    \begin{minipage}[b]{0.3\textwidth}
        \centering
        \begin{tikzcd} 
    & x_1 \\
    x_2  &  &    x_4 \arrow[d, shift left, "1.00"] \\
    x_5  &  & x_6
\end{tikzcd}
    \end{minipage}
    \hspace{1cm}
    \begin{minipage}[b]{0.3\textwidth}
        \centering
        \begin{tikzcd}
    & x_1 \\
    x_2  & x_3 \arrow[u, "2.70"]   \arrow[ld,  "3.40"] \arrow[l, "3.40"] \arrow[r, "2.30"] & x_4   \\
    x_5  &  
\end{tikzcd}
    \end{minipage}
    \hfill
    \caption{Left: $f(\gamma(G), x_3))$. Right: $f(\gamma(G), x_6))$. }
    \label{fig: example 2}
\end{figure}

\begin{example}
\label{ex: example}
In this example, we compute the quasi-centrality measure on nodes in an example directed weighted network $G$ as shown in Figure \ref{fig: example 1} (left), and compare the result with existing centrality measures.

The illustrations of $G$ and $\gamma(G)$ are shown in Figure 1 (left and right respectively). In the illustration of $\gamma(G)$, we leave out the edge $e_{ij}$ when $w(x_i, x_j)=0$. However, in our calculation of $C(x)$, we let the $m(x_i,x_j)$ to be a very big number when  $w(x_i, x_j)=0$ so that nodes $x_i,x_j$ in $m_X$  are very far apart that it is almost indistinguishable when no relation exists between the two nodes. In this example, we let $m(x_i,x_j) = 24.026$ to make computations more convenient.

Figure \ref{fig: example 2} illustrates the networks for $f(\gamma(G), x_3))$ and $f(\gamma(G), x_6))$, and Figure \ref{fig: example 4} illustrates the Dowker persistence barcodes for $f(\gamma(G), x_3)))$, $f(\gamma(G), x_6)))$, and $\gamma(G)$.  
 
\begin{figure*}[ht!]
\subfloat[ $\Dgm_0(\gamma(G)))$ \label{fig:PKR}]{%
      \includegraphics[ width=0.31\textwidth]{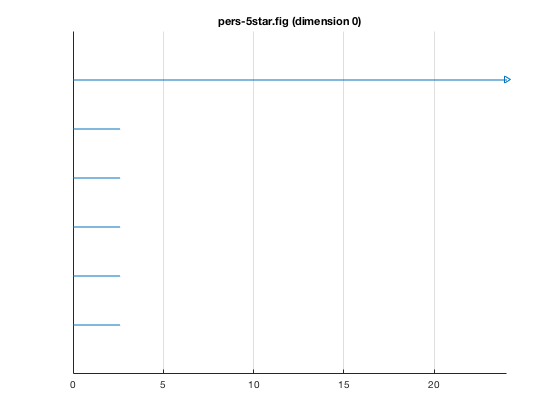}}
\hspace{\fill}
\subfloat[ $\Dgm_0(f(\gamma(G), x_3)))$ \label{fig:PKR}]{%
      \includegraphics[ width=0.31\textwidth]{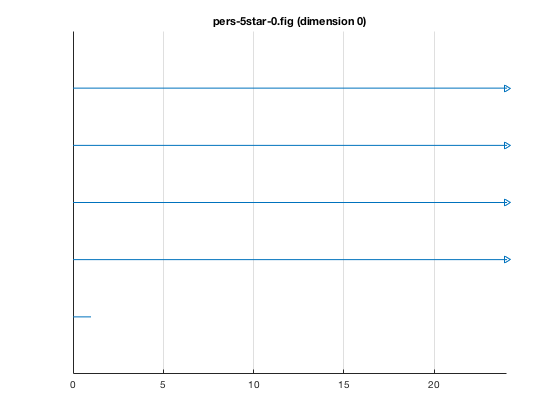}}
\hspace{\fill}
   \subfloat[$\Dgm_0(f(\gamma(G), x_6)))$ \label{fig:tie5}]{%
      \includegraphics[ width=0.31\textwidth]{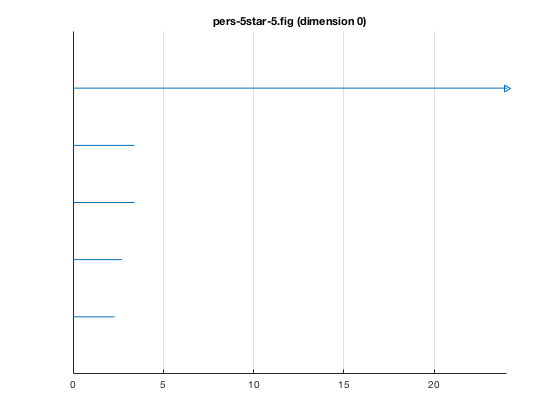}}\\
\caption{}
    \label{fig: example 4}
\end{figure*}

We illustrate the computation of $C(x_3)$ as an example. Since $\mu(x_3)=2.01$, 
\begin{align*}
    C(x_3) &= \sum_{j=1}^{n-1} d_{j}' -\sum_{j=1}^{n} d_j +2.01 \\
    &=(1+24.026\times 4)-(1+2.01+2.70+3.40+3.40+24.026)+2.01 \\
    &=65.978.
\end{align*}
Similarly, we have $C(x_1)=C(x_2)=C(x_4)=C(x_5)=0$ and $C(x_6)=0.29$. This shows that in $G$, node $x_3$ has the most propagating influence in the entire network, followed by node $x_6$. 

\begin{figure}
    \centering
\begin{tabular}{ | m{0.7cm} |  m{1.8cm} | m{1.8cm} | m{1.8cm} | m{1.8cm} | m{1.8cm} | m{1.8cm} |} 
  \hline
  Node & Quasi-centrality & Katz & Pagerank 1  & Pagerank 2 &  HITS-hubs &  HITS-authorities \\ 
  \hline
  $x_1$ & 0  & 0.38  &  0.15 &  0.50 & 0.00  & 0.11 \\
  \hline
   $x_2$ & 0 & 0.38 &  0.13 & 0.09 & 0.00  & 0.06 \\
   \hline
   $x_3$ & 65.978 &  0.23 &  0.12 & 0.09 & 0.47 & 0.00\\ 
   \hline
  $x_4$ & 0 & 0.38 &  0.16 & 0.09 & 0.53 & 0.17 \\
   \hline
   $x_5$ & 0  & 0.38  &  0.13  & 0.14 & 0.00  & 0.06\\ 
   \hline
  $x_6$ & 0.29  &  0.61 &  0.31 & 0.09 & 0.00 & 0.60\\
   \hline
  \end{tabular}
  \caption{Centrality values for $G$ obtained from the quasi-centrality, Katz centrality, HITS-hubs and authorities centralities, and Pagerank centrality.}
  \label{fig: table of quasi-cen ex}
\end{figure}

Figure \ref{fig: table of quasi-cen ex} illustrates the values obtained by other existing centrality measures. We have included two measures for the Pagerank centrality: Pagerank 1 outputs the centrality values for nodes in the original network $G$ and Pagerank 2 outputs the centrality measures for nodes in the network obtained by switching the direction of edges from $G$. 

We included Pagerank 2 for the following reason: since all existing centrality measures are based on the term $\sum A_{ij}x_j$, i.e., the centrality value $x_i$ for node $i$ only sums up the contribution values $x_j$ of nodes $j$ that points toward node $i$, by computing Pagerank 2 and comparing to the results obtained by the quasi-centrality, we essentially eliminates the possibility that the differences in the centrality values is solely due to the fact of the direction of the edges. 

We note that $x_3$ has low centrality values for all existing centrality measures, including both Pagerank centrality measures. On the contrary, we note that $x_3$ has the high quasi-centrality value, which is what we would expect since node $x_3$ holds the center position in the directed dissimilarity network $G$. Therefore in this case, the quasi-centrality better measures the propagating influences of nodes in this example. 

We expect the reason for all existing centrality measures having resulted in a low centrality value for $x_3$ for two reasons: firstly, all existing centrality measures depend on the adjacency matrix $A_{ij}$ and contain the term $\sum A_{ij}x_j$, which sums up the contribution values of nodes $j$ that links with node $i$. However, it is possible that the nodes $j$ are not central in the network and hence does not contribute to the centrality value of node $i$, which can be shown in this example. However, it is still possible that node $i$ holds the center position in the network even when its neighbors are not important in the network (such as in this example), in which all existing centrality measure fail to accurately measure its importance. Secondly, the adjacency matrix term $\sum A_{ij}x_j$ only sums up the contribution values of nodes $j$ that has an ingoing edge to node $i$ while ignoring the nodes that have an outgoing edge from node $i$, meaning that the centrality values are only taking into account of propagating influences on one side while ignoring the other.  

Generally, when measuring the propagating influence of a node, both incoming edges and  outgoing edges should positively impact the centrality value. For example, when measuring the influence of perturbations originated from a country or industry in the empirical trade production network, if a particular country or industry is both an important exporter and importer, it will have a greater influence on the entire economy compared to a country or industry that is only an important importer. The quasi-centrality in this case would give a more thorough measure of a node's propagating influence as it takes into account of the nodes' impact in both directions, whereas other centrality measures might only focus on ingoing edges or outgoing edges.

\end{example}

\section{Quasi-centrality applications}
In this section, we apply the quasi-centrality measure to an empirical network describing trade among countries and industries--the Asia machinery production network in 2007 and 2011. 
We show that the quasi-centrality measure gives a more refined evaluation of the propagating influences of nodes in the trade network than existing centrality measures.

The trade network exhibits many properties common to complex networks, including high clustering coefficient and degree-degree correlation between different vertices \cite{PhysRevE.68.015101}. 
Moreover, research suggests that random microeconomic shocks have the ability to propagate through a network with high inter-connectivity and cause “cascade effects” in the entire network and economy \cite{econometrica_2012}--a prime example is the Asiatic crisis. 
Therefore, it is important to take into account of the propagating effects of the trade network when evaluating node centrality. We later show that the quasi-centrality accurately predicts the influences of nodes in the trade network as it takes into account of such propagating effects.

We examine the machinery production network in Asia during 2007 and 2011 (network illustrations are shown in Figure \ref{fig: 2007 and 2011 Asia Network Visualization}), before and after the 2008-2009 financial crisis. Our network consists of both different intermediate products and different countries as nodes, as either counterpart plays a crucial role in supply and production chains\footnote{\label{footnote:supply}A supply/production chain is a system of intermediate transactions, suppliers, distributors, consisting of different countries and intermediate products.}, which are central for our understanding of the entire network.

\begin{figure*}[ht!]
\subfloat[ 2007 Asia machinery production network \label{fig:PKR}]{%
      \includegraphics[ width=0.5\textwidth]{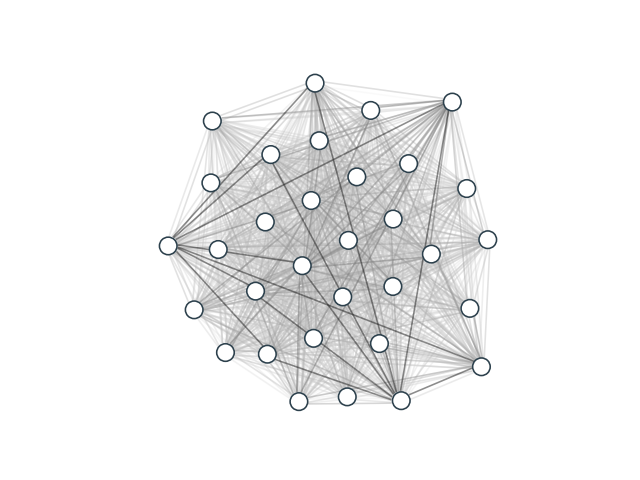}}
\hspace{\fill}
   \subfloat[2011 Asia machinery production network \label{fig:tie5}]{%
      \includegraphics[ width=0.5\textwidth]{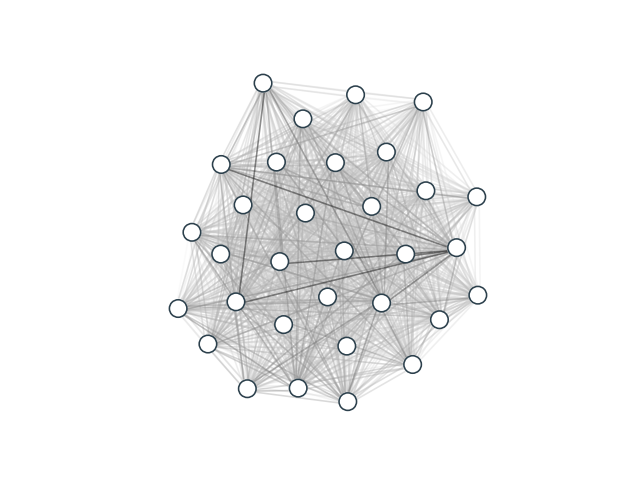}} \\
\caption{}
     \label{fig: 2007 and 2011 Asia Network Visualization}
\end{figure*}

Our data consists of $8 \times 4$ pairs of countries and industries, obtained from the OECD Inter-Country Input-Output (ICIO) Tables \cite{OECD2021}, which provides transport values between country-industry pairs. The eight countries and four industries are labeled in the following table:

\begin{center}  
\begin{tabular}{ |c|c|c|c| } 
\hline
Country Label & Country Name \\
\hline
KOR & North Korea \\
\hline
JPN & Japan \\
\hline
MYS & Malaysia \\
\hline
SGP & Singapore \\
\hline
TWN & Chinese Taipei \\
\hline
CN & China \\
\hline
VNM & Vietnam \\
\hline
THA & Thailand \\
\hline
\end{tabular} 
\end{center}

\begin{center}

\begin{tabular}{ |c|c|c|c| } 
\hline
Industry Label & Industry Name \\
\hline
C29 & Machinery and equipment \\
\hline
C30 & Computer, electronic and optical equipment \\
\hline
C31 & Electrical machinery and apparatus \\
\hline
C34 & Motor vehicles, trailers and semi-trailers \\
\hline
\end{tabular} 

\end{center}  

For brevity, we abbreviate the 32 nodes by [label of industry]-[label of the country]. 
For example, the machinery and equipment industry in Korea is referred to as KOR-C29.

\begin{remark}
In our data obtained from the OECD Inter-Country Input-Output (ICIO) Tables, imports are valued at basic prices of the country of origin, i.e., the domestic and international distribution included in goods imports in c.i.f. purchasers' prices are re-allocated to trade, transport and insurance sectors of foreign and domestic industries. 
\end{remark}

\begin{remark}
We exclude taxes paid and subsidies received in foreign countries in our network. 
\end{remark}
\begin{remark}
For the rest of Section 4, we will use ``node'' and ``country-industry pair'' interchangeably.
\end{remark}

\subsection{2007 Asia machinery production network}

We first examine nodes' propagating influences in the 2007 Asia machinery production network. We compute the quasi-centralities (Figure \ref{fig: 2007-asia-quasi-centrality}) for nodes in the network and compare with existing centrality measures (Figure \ref{fig: 2007-asia-other-centrality}) and later demonstrate that quasi-centrality successfully captures propagating effects and more accurately evaluates nodes' propagating influences in the network.

\begin{figure}
    \centering
    \includegraphics[width=13cm]{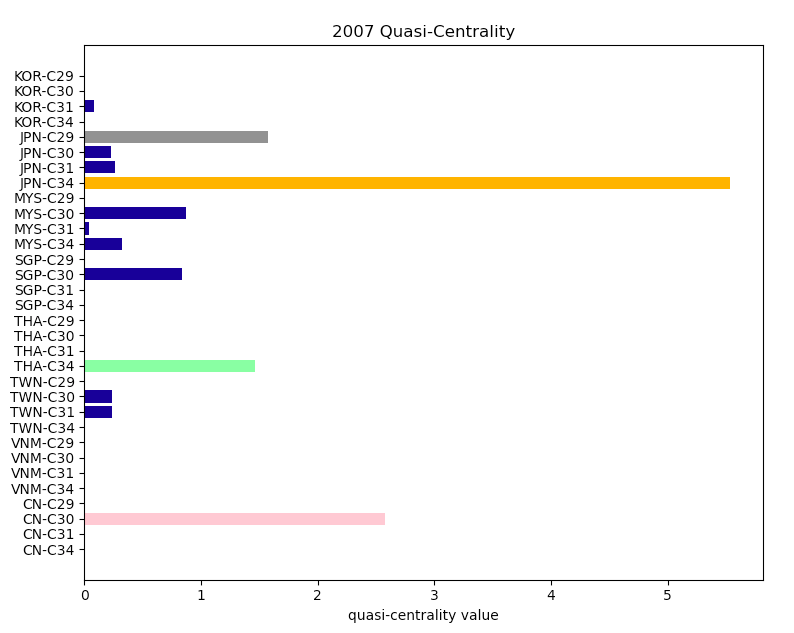}
    \caption{Quasi-centralities for the 2007 Asia machinery production network.}
    \label{fig: 2007-asia-quasi-centrality}
\end{figure}

\begin{figure}
    \centering
    \hspace{-2cm}
    \begin{minipage}[b]{0.3\textwidth}
        \centering
        \includegraphics[width=7.3cm]{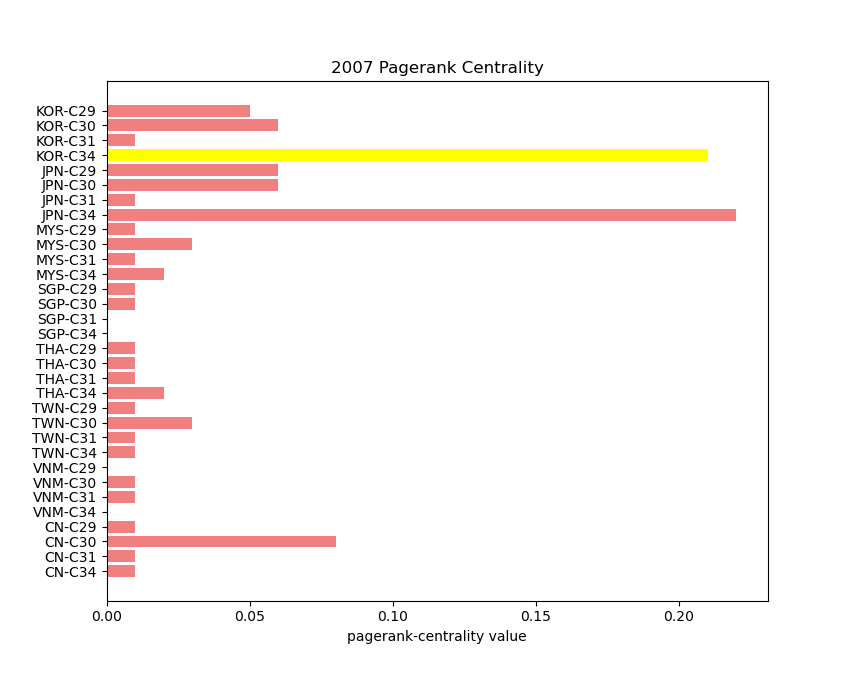}
    \end{minipage}
    \hspace{3cm}
    \begin{minipage}[b]{0.3\textwidth}
        \centering
        \includegraphics[width=7.3cm]{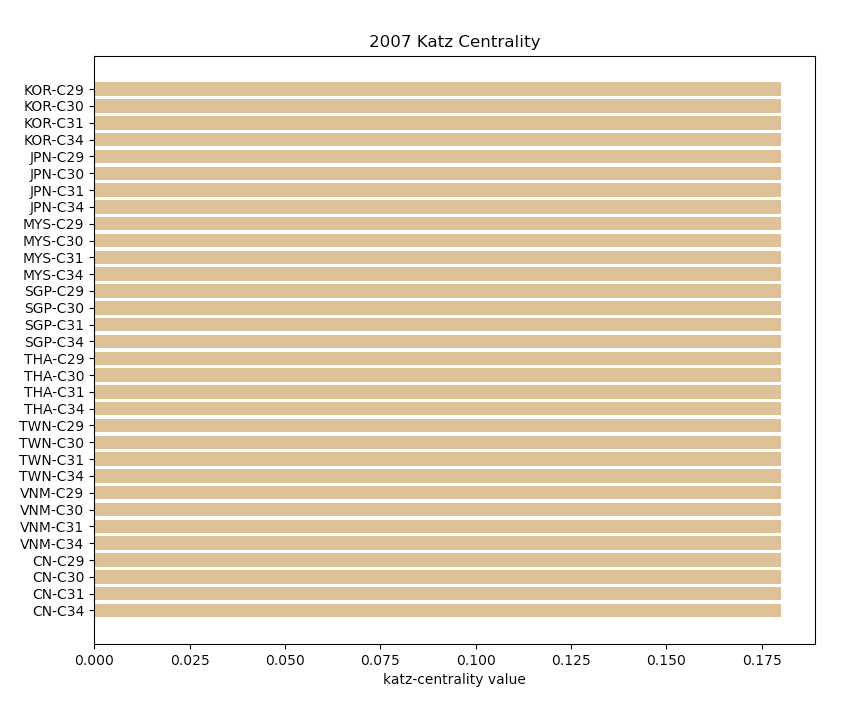}
    \end{minipage}
    \vfill
    \hspace{-2cm}
    \begin{minipage}[b]{0.3\textwidth}
        \centering
        \includegraphics[width=7.3cm]{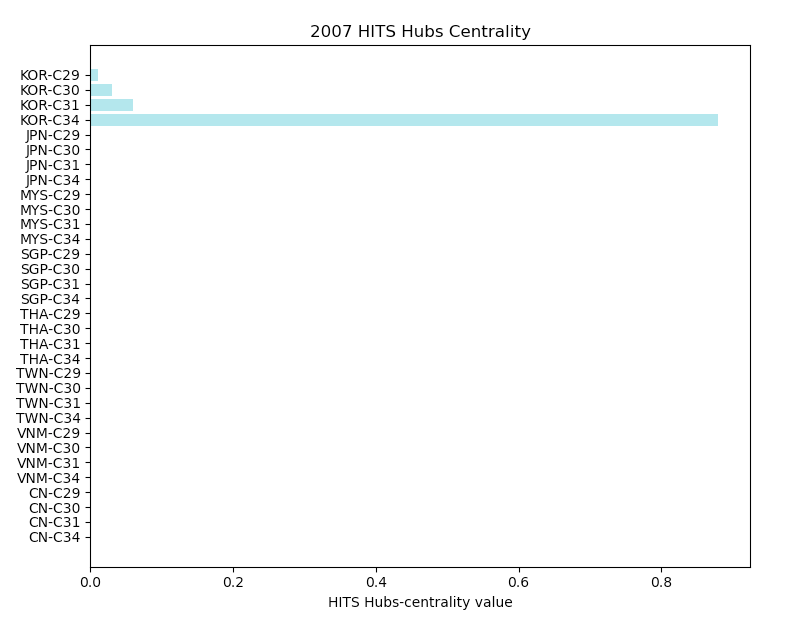}
    \end{minipage}
    \hspace{3cm}
    \begin{minipage}[b]{0.3\textwidth}
        \centering
        \includegraphics[width=7.3cm]{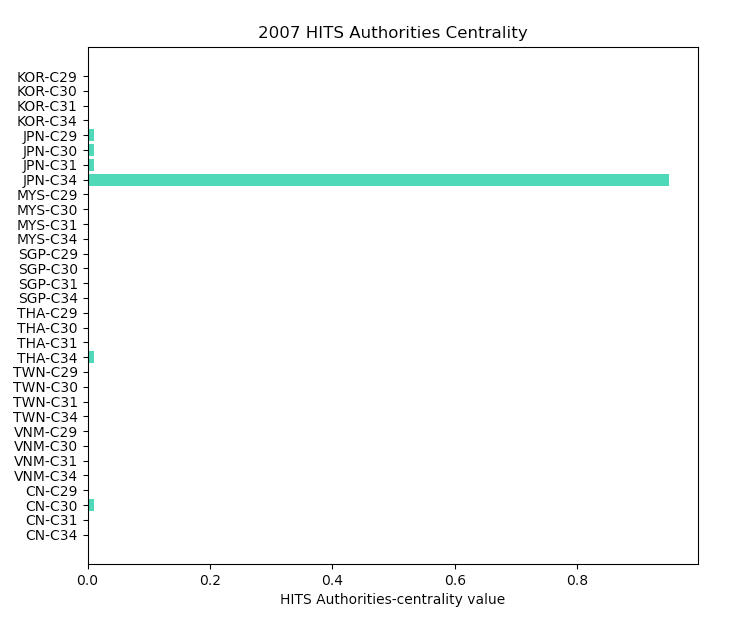}
    \end{minipage}
    \hfill
    \caption{Existing centralities (Katz, Pagerank, HITS-hubs, HITS-authority) for the 2007 Asia machinery production network.}
    \label{fig: 2007-asia-other-centrality}
\end{figure}

According to Figure \ref{fig: 2007-asia-quasi-centrality}, JPN-C34 (marked in orange) has the highest quasi-centrality value, implying that it has the highest propagating influence over the network, followed by CN-C30 (marked in pink), JPN-C29 (marked in grey), THA-C34 (marked in green), etc.
These values reflect that China and Japan were the two largest contributors to the machinery production industry in Asia in 2007 \cite{doi:10.1080/10168737.2016.1148398}.

We note that THA-C34 also has a relatively high quasi-centrality value. Although Thailand only accounted for 1.12 $\%$ of global exports of vehicle parts in 2007, while Japan, Korea, and China accounted for 10.2 $\%$, 4.05 $\%$, and 3.98 $\%$ (\cite{OEC.2021}) respectively, Thailand holds an important network position locally in Asia due to its high  inter-correlation with China, the leading player in the network. 
 According to \cite{/content/publication/9789264237087-en}, between 2000 to 2010, Thailand's exports in motor vehicles parts, electronics and electrical appliances increased by $>40$ percent, and growth in trade with China (both imports and exports) also grew substantially. 
 By 2010, China became Thailand's lead export destination.
 
 Since quasi-centrality is designed to capture how much a node contributes to the overall connectivity of the network, meaning that higher intensity of links between nodes will lead to a higher connectivity, we may also  conclude the following: if a node $a$ is highly influential in the network, and if another node $b$ exhibits strong linkage with node $a$, then node $b$ also plays a relatively important role in the network because perturbations originated from $b$ will have a high influence on $a$, and will then be highly influential to the entire network. Since Thailand is highly interconnected with China, one of the main players in the network, we may infer that perturbations originating from Thailand have a high potential to propagate over the entire network.

\paragraph{Comparison with existing centrality measures.}

We observe that the relative importance of THA-C34 according to the quasi-centrality is not reflected in other existing centrality measures. We hypothesize that this is due the fact that the existing centrality measures are dependent on the adjacency matrix $A_{ij}$, implying that the centrality value for THA-C34 is based on its local neighborhood, while the quasi-centrality evaluates the THA-C34's contribution to the connectivity of the entire network, emphasizing its interconnectedness with CN-C30.  

According to Figure \ref{fig: 2007-asia-other-centrality}, the Katz centrality returns similar centrality values for all nodes in the network, failing to capture subtleties in the influences of nodes. 
Since the trade network exhibits high clustering properties, i.e., every country-industry pair is linked with almost every other country-industry pair, the Katz centrality is unable to capture the relative differences in the influence of nodes. 
We may conclude that this is due to the fact that Katz centrality ignores the importance of the intensity of interaction between edge links, i.e. it does not take edge weights into account.

Moreover, both HITS-hubs and HITS-authorities centralities are not refined since their output values fail to evaluate the relative scale of the the influences of nodes in the network: 
they are only able detect a single node that is highly influential, while ignoring the influences of other nodes. 
This is not helpful for detecting potential souces of shocks in supply chains because we are only able to detect one single chain instead of many chains.

Of the existing centrality measures, the Pagerank centrality gives similar relative information of the influence of nodes compared to the gradation given by the quasi-centrality. 
However, we note that the Pagerank centrality outputs a high centrality value for KOR-C34 (marked in yellow), while the quasi-centrality measure for KOR-C34 is zero. 
According to \cite{doi:10.1080/10168737.2016.1148398}, even though Korea has high import value for machinery final products, its export value is relatively low--in particular, KOR-C34 has a many more incoming edges than outgoing edges. 
Since the Pagerank centrality $x_i$ for a node $i$ is $\alpha \sum_{j} A_{ij}x_j+\beta$, which only sums up the contribution value of nodes $j$ that has an edge that points toward node $i$, KOR-C34 has a high Pagerank centrality value as it has a large number of incoming edges, i.e., imports.

However, a country-industry pair's propagating influence in the entire trade network should depend on both its export and import values, as a country-industry pair with both high export and import should have a larger propagating influence than a country-industry pair with only high import values. 
Thus in this case, quasi-centrality offers a better assessment of a node's influence than Pagerank does. 
While the Pagerank centrality only remembers one direction at a time, meaning that it captures some propagating influence but not all, the quasi-centrality captures both directions and gives a more comprehensive evaluation of the propagating influence of a node.

\subsection{2011 Asia machinery production network} 
Now we proceed to analyze the propagating influences of nodes in the 2011 Asia machinery production network, following the 2007-2008 financial crises. The objective is to identify changes in power dynamics of nodes in the network compared to 2007.

\begin{figure}
    \centering
    \includegraphics[width=13cm]{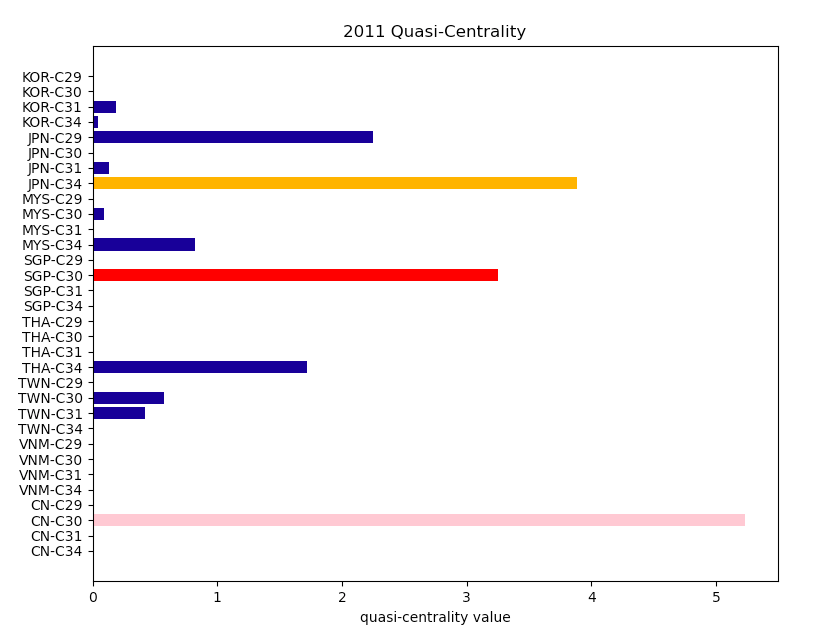}
    \caption{Quasi-centralities for the 2011 Asia machinery production network.}
    \label{fig: 2011-asia-quasi-centrality}
\end{figure}

\begin{figure}
    \centering
    \hspace{-2cm}
    \begin{minipage}[b]{0.3\textwidth}
        \centering
        \includegraphics[width=7.3cm]{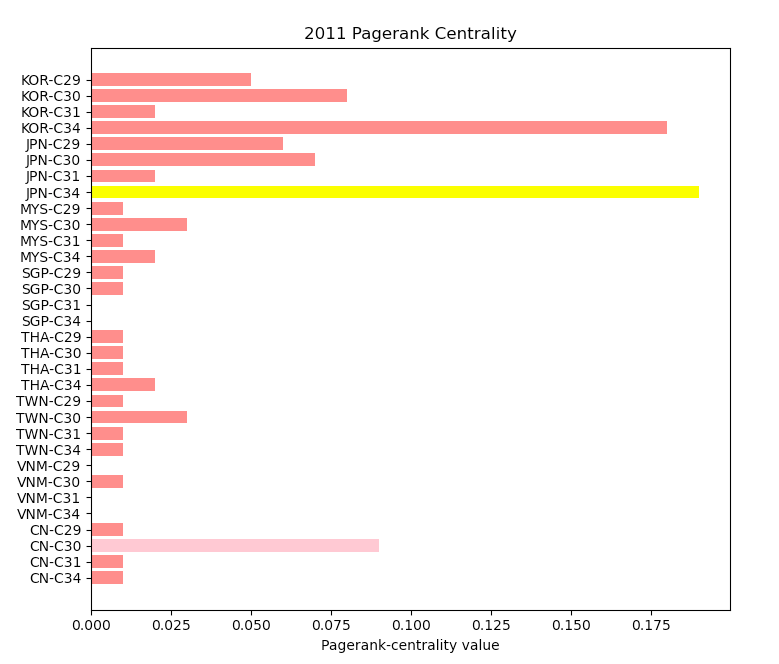}
    \end{minipage}
    \hspace{3cm}
    \begin{minipage}[b]{0.3\textwidth}
        \centering
        \includegraphics[width=7.3cm]{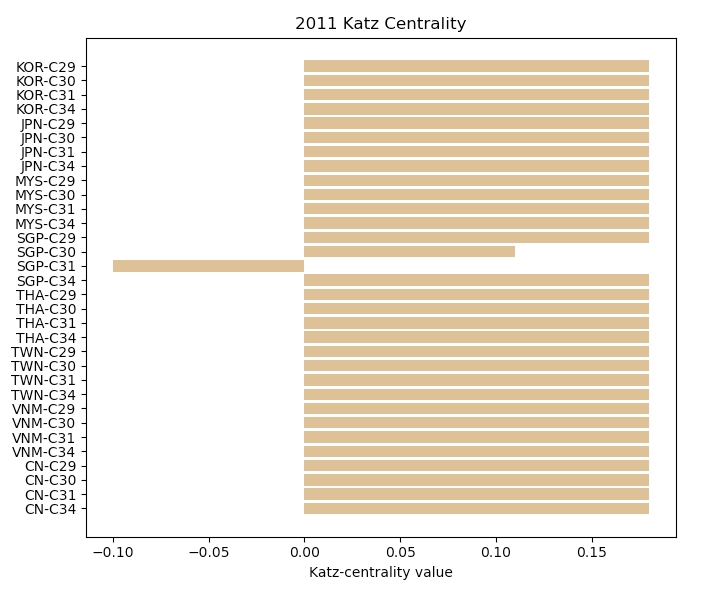}
    \end{minipage}
    \vfill
    \hspace{-2cm}
    \begin{minipage}[b]{0.3\textwidth}
        \centering
        \includegraphics[width=7.3cm]{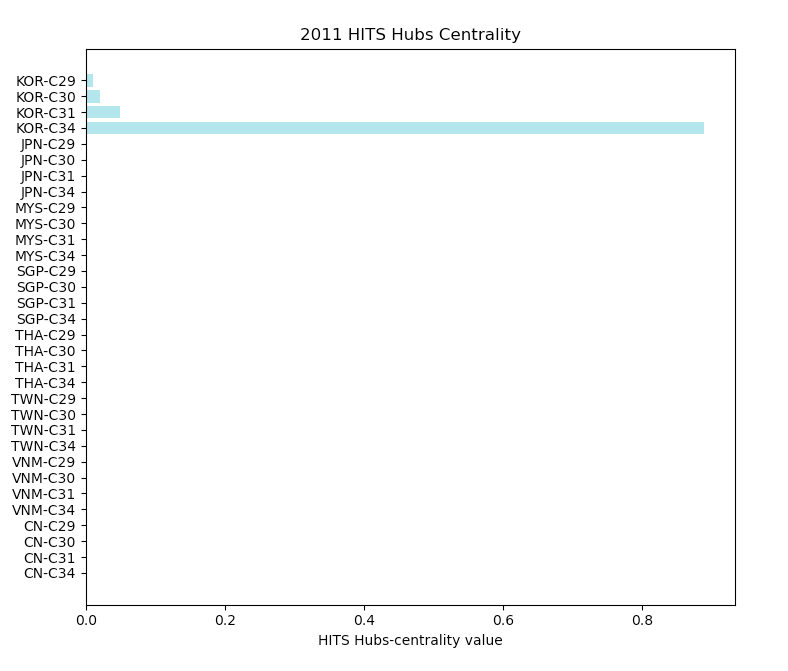}
    \end{minipage}
    \hspace{3cm}
    \begin{minipage}[b]{0.3\textwidth}
        \centering
        \includegraphics[width=7.3cm]{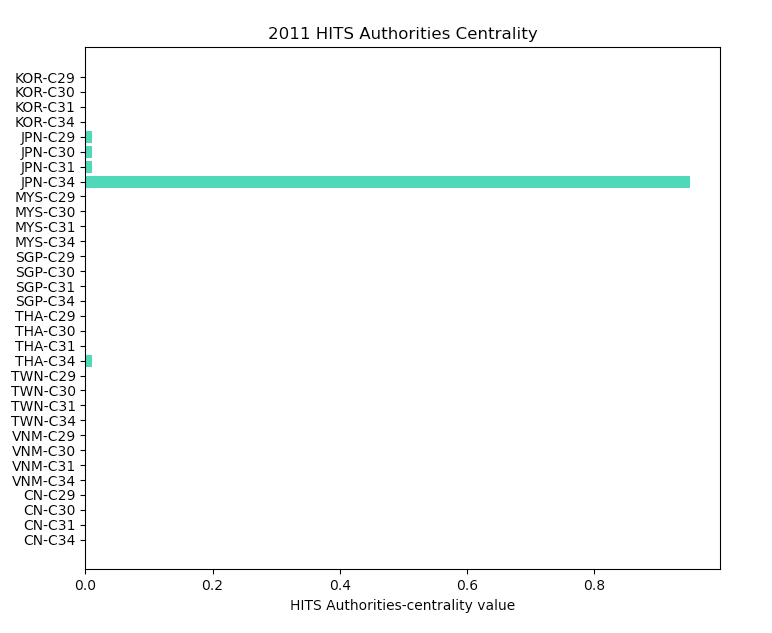}
    \end{minipage}
    \hfill
    \caption{Existing centralities (Katz, Pagerank, HITS-hubs, HITS-authority) for the 2011 Asia machinery production network.}
    \label{fig: 2011-asia-other-centrality}
\end{figure}

According to Figure \ref{fig: 2011-asia-quasi-centrality}, CN-C30 (marked in pink) has the highest quasi-centrality value, followed by JPN-C34 (marked in orange) and SGP-C30 (marked in red).  
It is worth noting the change in the quasi-centrality values for JPN-C34, CN-C30, and SGP-C30 from 2007 to 2011. While the quasi-centrality value for CN-C30 and SGP-C30 have significantly increased, the value for JPN-C34 has significantly decreased.

According to \cite{doi:10.1080/10168737.2016.1148398}, China became a dominant player in the global machinery production network in terms of \emph{both} export value and the diversity of industry-destination pairs following the 2008-2009 financial crisis. Since the quasi-centrality measures how a node contributes to the overall connectivity of the network, accounting for \emph{all} factors of import, export values, and industry-destination pairs, China's growth of impact in the network is substantially illustrated by its increase in the quasi-centrality value. 

According to \cite{doi:10.1080/10168737.2016.1148398}, Japan's machinery production growth, on the other hand, stagnated because of two main reasons after the financial crisis of 2008-2009. First, transportation links between Japan and multiple countries weakened, specifically with Korea.
 
Second, there was a noticeable decrease in the number of product–destination pairs in Japan’s exports of machinery. Both of these factors have led to Japan's stagnation in economic growth, and the stagnation is distinctively illustrated by the change of Japan's quasi-centrality value relative to others as computed in Figure \ref{fig: 2011-asia-quasi-centrality}.

In addition to the significant power dynamic changes of Japan and China in the network, we also note that the quasi-centrality value for SGP-C30 has significantly increased compared to 2007. According to \cite{hoon_2022},  the Singapore electrical machinery and apparatus industry became more \emph{central} in numerous supply chains (see footnote \ref{footnote:supply}) after the financial crisis, thereby playing an important role as an intermediary between an abundance of production, supply, distribution, and post-sales activities of goods and services.

\paragraph{Comparison with existing centrality measures.}  

We note that China's dominance in the network and Japan's stagnation of economic growth are not reflected in the existing centrality measures. Figure \ref{fig: 2011-asia-other-centrality} shows that the HITS-hubs and authorities centrality measures are not refined in that they continue to only detect the node with the highest influence in the network (KOR-C34 and JPN-C34 respectively), and that the Katz centrality continues to output the same centrality measure for almost all the nodes. 

Moreover, the Pagerank centrality continues to measure JPN-C34 (marked in yellow) as being the most influential node in the network. We hypothesize that this is due to the fact that since the Pagerank centrality only sums up the contribution values of nodes having an incoming to JPN-C34, the decrease in product–destination pairs for JPN-C34 of \emph{exports} would not affect its Pagerank centrality. 
Furthermore, CN-C30 (marked in pink) also has a relatively low Pagerank centrality value even though it is the dominant player in the network in 2011. We hypothesize that this is again due to the fact that the Pagerank centrality only sums up the contribution values of nodes having an incoming edge to CN-C30, but CN-C30 exports far more than imports. Specifically, China's raw export value of machines globally was $ \$ 894 $ billion dollars in 2011, while its raw import value of machines globally was only $ \$ 328 $ billion dollars.

Moreover, the increase of Singapore electrical machinery industry's influence in the network is substantially reflected by its quasi-centrality value, while not reflected in the existing centrality measures, again showing how the quasi-centrality is effective in measuring a node's propagating influence in the network.

\subsection{Summary}
We computed the quasi-centralities of nodes  in the 2007 and 2011 Asia machinery production network and compared the result with the existing centrality measures. Based on empirical trade statistics, we showed that quasi-centrality is optimal for detecting propagating influences of a node in the network as it takes into account both the import and export values of a country and industry pair, as well as the node's number of industry-destination pairs and its role in intermediary transactions. 

Specifically, we showed that empirical facts support the information obtained by computing the quasi-centrality measure: a set of countries and industries such that perturbations originating from this set of nodes can propagate significantly through the network.  
 
\paragraph{Potential applications}
Trade is often adversely affected by perturbations in a single country, including natural disasters such as a hurricane, earthquake or flood, and political turmoil or an armed conflict \cite{korniyenko_pinat_dew_2017}. Understanding which collections of nodes in the network might potentially cause significant propagating impacts can aid in identifying bottlenecks, thereby making the trade network more resilient. 
We propose that using the quasi-centrality measure can ultimately serve to protect world economies.

\section{Clustering nodes by topological influences}

In section 5.1, we introduce a method (Definition \ref{def: hierarc}) that combines hierarchical clustering and persistent homology to construct a hierarchy of nodes based on their impact in the overall topology of a directed weighted network. 
In sections 5.2 and 5.3, we apply our method to the 2007 and 2011 Asia machinery production network.

\subsection{Node hierarchy using TDA} 
Before introducing our method in Definition \ref{def: hierarc}, we recall how hierarchical clustering works. 

Given a set of objects and a specified weight function \cite{kulish_2002} that computes the similarity between two objects in the set, hierarchical clustering is a method for detecting community structures within this set of objects by arranging the objects into a hierarchy of groups according to the weight function. 
For a given value of parameter $t$, hierarchical clustering groups objects that are ``at most $t$ far apart from each other'' according to the weight function.

\begin{definition}
Given a finite set $X$, a \emph{partition} $ P $ of $X$ is a collection $ P=\{B_1, ..., B_k\}$ of subsets of $X$ such that: 
\begin{enumerate}[nolistsep]
    \item For all $i \not = j$,  $B_i \cap B_j = \emptyset $, and 
    \item The union $\cup_{i=1}^{k}B_i =X$.
\end{enumerate} 
Each $ B_i $ is referred to as a \emph{block} of $P$. We denote the set of all partitions of $X$ by $ \text{Part}(X) $. 
\end{definition}

Because the partition depends on t, we would like a family of t-partitions, hence a hierarchical dendrogram is defined to be a function that outputs a partition given an input value of t:

\begin{definition}
Let $X$ be a finite set. A \emph{dendrogram over $X$} is a weight function \\
$ {D_X: \mathbb{R}_{+} \rightarrow \text{Part}(X)} $ such that:
\begin{enumerate}[nolistsep]
    \item  For $t' \geq t$, every block of $D_X(t)$ is contained in a block of $D_X(t')$.
 \item  There exists $t_F\in \mathbb{R}_{+}$ such that for all $t \geq t_F$ , $D_X(t)=\{X\}$. 
 \item The weight function $D_X(0)$ consists of singletons $\{x\}$ for all $x \in X$ .
 \item  For all $t\in \mathbb{R}_{+}$, there exists $\epsilon > 0$ such that $D_X(t)=D_X(t')$ for all $t' \in [t, t+\epsilon ]$.
\end{enumerate}
\end{definition}

As the value of parameter $t$ increases, the partition $D_X(t)$ gets ``coarser'', i.e., objects that are within $t$ similar to each other belong to the same block in $D_X(t)$.  

A dendrogram can be thought of as a (nested) family of partitions of $X$ into \emph{similarity classes}, with the variable $ t $ specifying the desired fineness and granularity of the partition $ D_X(t) $. Elements that belong to the same block in $D_X(t)$ for small values of $t \in \mathbb{R}_{+}$ are more similar according to the weight function $D_X$, and vice versa. 

\begin{remark}
We will not be going into detail of the computer algorithm that we use to construct our hierarchical dendrogram since the exact computer algorithm will not be necessary for explaining our work. We will use \emph{single-linkage clustering}\footnote{\label{footnote:single}You can read more concrete details about the algorithm in Section II of \cite{8862232}.} as our clustering algorithm, which is also used in Definition \ref{def: hierarc}. 
  
\end{remark}

Now we explain how we can extract a hierarchical dendrogram from a dissimilarity directed network $G=(X,w_X)$ based on nodes' impact in the topology of the network (Definition \ref{def: hierarc}).
We first introduce the set of objects included in our hierarchy:

\begin{definition}
Given a dissimarity network $G=(X,w_X)$, let $\gamma(G)$ be as defined in Definition \ref{def: gamma weight} and assume that $|X|=n$, the number of nodes in $G$. For each node $x \in X$, let $f(\gamma(G),x)$ be as defined in Definition \ref{def: function f gamma}. Define the set of objects denoted by $S_G$ in the hierarchical dendrogram associated to $G$ as:

\[S_G :=\left\{\Dgm(f(\gamma(G),x)) \mid x \in \gamma(G) \right\} \cup \{ \Dgm(\gamma(G))) \}.\]
\end{definition}

We are interested in using dendrograms to express the hierarchy of nodes in a given directed network based on nodes' topological influences in the network. 
The main idea is that the bottleneck distance between the Dowker persistence diagram associated to $G$ and the diagram associated to the induced network $G_x=(X\setminus\{x\},w_X)$ is small if $ x $ does not play an important role in the topology of $G$, and vice versa. We first introduce the set of objects $S_G$ in the hierarchical dendrogram associated to $G$:

Recall that given a directed weighted network, the Dowker persistence diagram summarizes the homological features in the Dowker complex associated to the network.
We may then conclude that given two directed weighted networks, if the two  Dowker persistence diagrams associated to the networks are similar, then the networks have similar topology.  

We measure the similarity between two persistence diagrams using the bottleneck distance (Definition \ref{defn:bottleneck}), which describes the cost of the optimal \emph{matching}\footnote{A matching $\eta$ between persistence diagrams $D_1$ and $D_2$ pairs each point in $D_1$ with a point in $D_2$ or a point on the diagonal line, and pairs each point in $D_2$ with a point in $D_1$ or a point on the diagonal.} between points of the two diagrams \cite{agami2020comparison}, introduced below:

\begin{definition}\label{defn:bottleneck}
Given two persistence diagrams $D_1$ and $D_2$, the \emph{bottleneck distance}  of $D_1$ and $D_2$ denoted by $d_{B_\infty} (D_1,D_2)$ is defined as
\[d_{B_\infty} (D_1,D_2)=\inf_{\eta: D_1 \rightarrow D_2} \sup_{x\in D_1} ||x-\eta(x) ||_\infty \]
where $\eta$ ranges over all embeddings $\eta:  D_1 \hookrightarrow D_2 \cup \Delta $, where $ \Delta $ denotes the diagonal and $||(x,y)||_\infty$ is the usual $L_\infty$ norm.  
\end{definition}

 If the bottleneck distance between two Dowker persistence diagrams is close to zero, i.e., if $d_{B_\infty} (\Dgm_1,\Dgm_2) \approx 0$, then we may conclude that two corresponding directed weighted networks have similar topological structure, and vice versa.

Using $S_G$, we may obtain the hierarchical dendrogram associated to the dissimilarity directed network $G$ as follows: 

\begin{definition}

Given a dissimilarity network $G=(X,w_X)$ with the bottleneck distance, the \emph{hierarchical dendrogram $\mathcal{H}_G$ associated to $G$} is the function
 
\begin{align*}
    \mathcal{H}_G:& \mathbb{R}_+ \rightarrow \text{Part}(S_G)\\
    &t \mapsto \text{partition of } S_G \hspace{1mm} (S_G, \text{bottleneck distance}). 
\end{align*}

And we implement the single-linkage clustering (see footnote \ref{footnote:single}) algorithm to extract the hierarchical dendrogram from the function $\mathcal{H}_G$.

\label{def: hierarc}
\end{definition}

For a given object $ x \in X $ and parameter value $t=t'$, we will denote the block containing $ x $ in $\mathcal{H}_G(t')$ by $\mathcal{B}_{\mathcal{H}_G}(x,t')$.

\subsection{Hierarchical clustering of the 2007 Asia machinery production network}
In this section, we examine the 2007 Asia machinery production network by analyzing the hierarchical dendrogram associated to the network obtained by the method we introduced in the previous section (Definition \ref{def: hierarc}). 

Figure \ref{fig: 2007 asia gamma dendrogram} shows the hierarchical dendrogram associated to the 2007 Asia machinery production network, where the $y$-axis shows the 33 \emph{nodes}\footnote{For brevity, we will refer to the objects in the hierarchy to be the nodes in the dendrogram.} in the dendrogram, corresponding to the elements in $S_G$, where each of $\Dgm(f(\gamma(G),x))$ is labeled by the node $x$ deleted, and  $\Dgm(\gamma(G))$ is labeled by STANDARD, and the values along the $x$-axis corresponds to the values of the parameter $t$ in the weight function $\mathcal{H}_G$. 
The vertical line at $t=t'$ corresponds to the blocks of the partition  $\mathcal{H}_G(t')$.

We consider the smallest value of $t$ for which $ \Dgm(f(\gamma(G), x)) $ belongs to the cluster containing STANDARD to be its topological impact on the network. 
In other words, we expect nodes in the dendrogram belonging to the block containing the STANDARD node at smaller values of $t$ in the partition $\mathcal{H}_G(t)$ to have a smaller impact in the topology of the network, i.e., if given two nodes $a,b$ and $t < t'$, and that $a \in \mathcal{B}_{\mathcal{H}_G}(\text{STANDARD},t)$ and $b \in \mathcal{B}_{\mathcal{H}_G}(\text{STANDARD},t')$ then $b$ has a greater topological influence in the network topology. 

According to Figure \ref{fig: 2007 asia gamma dendrogram}, we see that $\text{THA-C30},  \text{VNM-C34} \in \mathcal{B}_{\mathcal{H}_G}(\text{STANDARD},t=0)$, implying that the country-industry pairs THA-C30 and VNM-C34 are insignificant in the topological structure of the network. We consider these nodes as ``peripheral,'' i.e. deleting them doesn't really change the network topology. 

\begin{figure}
    \centering
    \includegraphics[width=14cm]{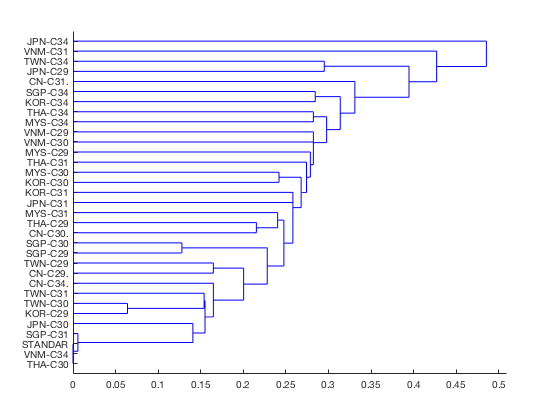}
    \caption{Hierarchichal dendrogram associated to the 2007 Asia machinery production network.}
    \label{fig: 2007 asia gamma dendrogram}
\end{figure}

On the contrary, when $t < 0.48$, $\text{JPN-C34} \not \in \mathcal{B}_{\mathcal{H}_G}(\text{STANDARD},t)$ while \\
$ {\mathcal{B}_{\mathcal{H}_G}(\text{STANDARD},0.42) = \{S_G \setminus  \{\text{JPN-C34}\} \} } $, thus we conclude that deleting JPN-C34 has the most significant impact in the topological structure of the network. 
Indeed, we see that the 1-dimensional barcodes of $\Dgm(f(\gamma(G),\text{JPN-C34}))$ (Figure \ref{fig: 2007 dendrogram 3}(b)) differs significantly from the 1-dimensional barcodes of $\Dgm(\gamma(G))$ (Figure \ref{fig: 2007 dendrogram 3}(a)): while all 1-dimensional barcodes in  $\Dgm(\gamma(G))$ have death parameters $\delta \leq 4.38$, there exists a 1-dimensional barcode in $\Dgm(f(\gamma(G),\text{JPN-C34}))$ with death parameter $\delta = 5.64 $, indicating that JPN-C34 has a significant impact in the one-dimensional homological features of $\gamma(G)$, reflecting its importance in the network topology.

\begin{figure*}[ht!]
\subfloat[ $\Dgm(\gamma(G))$. \label{fig:PKR}]{%
      \includegraphics[ width=0.5\textwidth]{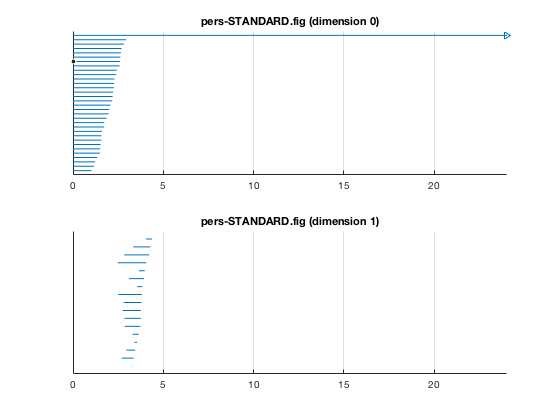}}
\hspace{\fill}
\subfloat[ $\Dgm(f(\gamma(G),\text{JPN-C34}))$. \label{fig:PKR}]{%
      \includegraphics[ width=0.5\textwidth]{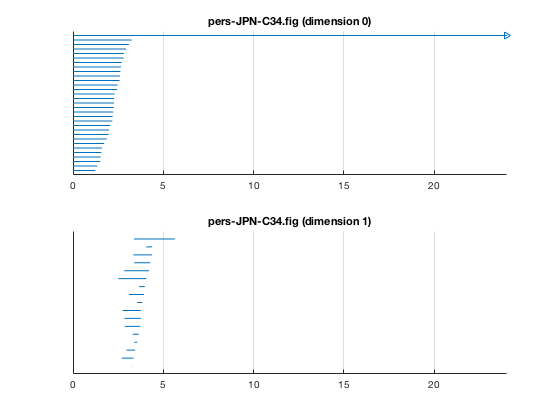}}
\hspace{\fill}
   \subfloat[$\Dgm(f(\gamma(G),\text{VNM-C31}))$. \label{fig:tie5}]{%
      \includegraphics[ width=0.5\textwidth]{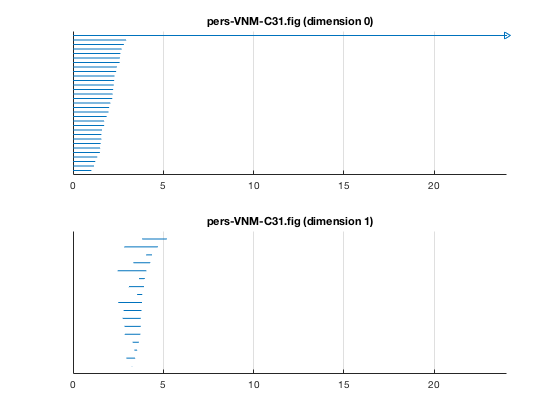}}
\hspace{\fill}
   \subfloat[$\Dgm(f(\gamma(G),\text{CN-C31}))$. \label{fig:tie5}]{%
      \includegraphics[ width=0.5\textwidth]{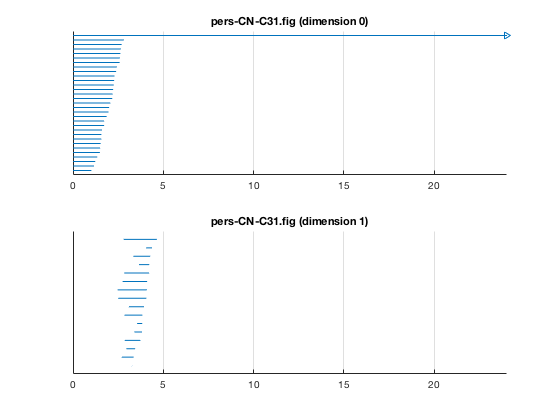}} \\
\caption{Dowker persistence diagrams for sub-networks obtained from the 2007 Asia machinery production network.}
    \label{fig: 2007 dendrogram 3}
\end{figure*}

We note that from the result illustrated in Section 4.1, JPN-C34 also has the highest quasi-centrality value in the 2007 Asia machinery production network. Since the quasi-centrality measures the propagating influence of JPN-C34 and the network hierarchy reflects JPN-C34's impact in the overall topological structure of the network, we conclude that perturbations originating from JPN-C34 not only effectively propagate throughout the entire network, but also changes the network topology significantly. 

Similarly, when $t < 0.33$, $\text{CN-C31} \not \in \mathcal{B}_{\mathcal{H}_G}(\text{STANDARD},t)$, implying that CN-C31 also has a significant impact in the network topology. Indeed, the 1-dimensional barcodes of $\Dgm(f(\gamma(G),\text{CN-C31}))$ (Figure \ref{fig: 2007 dendrogram 3}(d)) also differ significantly from the 1-dimensional barcodes of $\Dgm(\gamma(G))$ (Figure \ref{fig: 2007 dendrogram 3}(a)). 
Hence similar to JPN-C34, we conclude that perturbations originating from CN-C31 not only effectively propagate throughout the entire network, but also changes the network topology significantly.

On the contrary, although VNM-C31 has zero quasi-centrality value from the result in Section 4.1, when $t < 0.43$, $\text{VNM-C31} \not \in \mathcal{B}_{\mathcal{H}_G}(\text{STANDARD},t)$, implying that VNM-C31 has a significant impact in the network topology. Indeed, the 1-dimensional barcodes of $\Dgm(f(\gamma(G),\text{VNM-C31}))$ (Figure \ref{fig: 2007 dendrogram 3}(c)) exhibit considerable differences compared to the one-dimensional barcodes of $\Dgm(\gamma(G))$ (Figure \ref{fig: 2007 dendrogram 3}(a)). However, the 0-dimensional persistence diagrams do not reflect significant differences. Since the quasi-centrality is measured based on how much VNM-C31 contributes to the overall connectivity of the network and is measured in terms of connected components (0-dimensional homological features), we conclude that although perturbations originating from VNM-C31 cannot propagate effectively throughout the entire network, it can potentially cause changes in the network topology.

 \begin{figure*}[ht!]
\subfloat[ $\Dgm(f(\gamma(G),\text{THA-C31}))$. \label{fig:PKR}]{%
      \includegraphics[ width=0.5\textwidth]{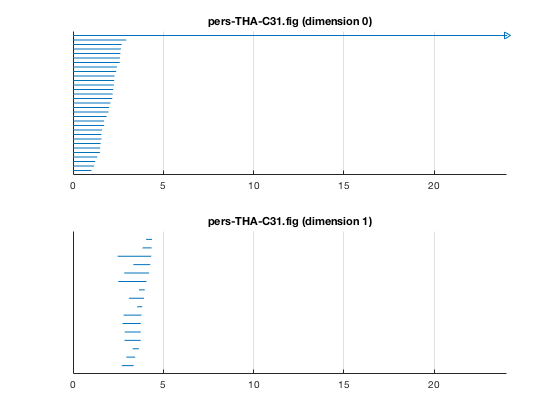}}
\hspace{\fill}
   \subfloat[$\Dgm(f(\gamma(G),\text{MYS-C29}))$. \label{fig:tie5}]{%
      \includegraphics[ width=0.5\textwidth]{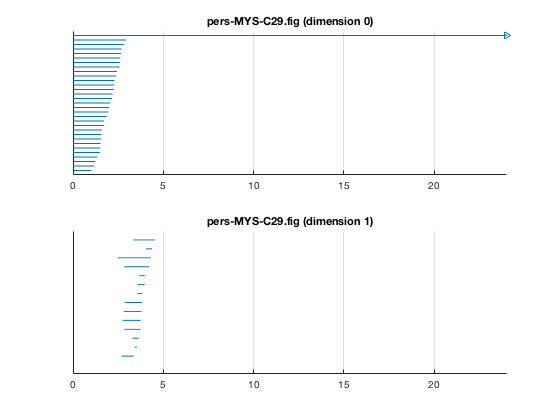}}
\hspace{\fill}
   \subfloat[$\Dgm(f(\gamma(G),\text{VNM-C30}))$. \label{fig:tie5}]{%
      \includegraphics[ width=0.5\textwidth]{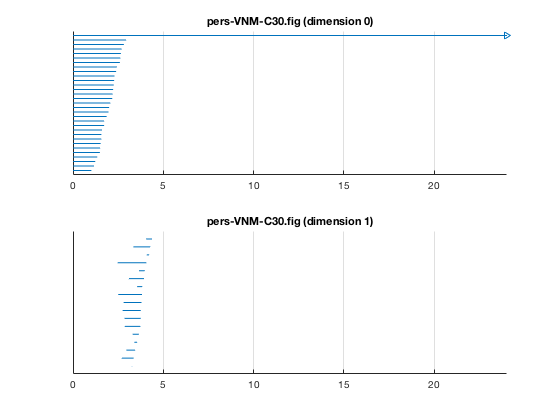}} 
\hspace{\fill}
   \subfloat[$\Dgm(f(\gamma(G),\text{VNM-C29}))$. \label{fig:tie5}]{%
      \includegraphics[ width=0.5\textwidth]{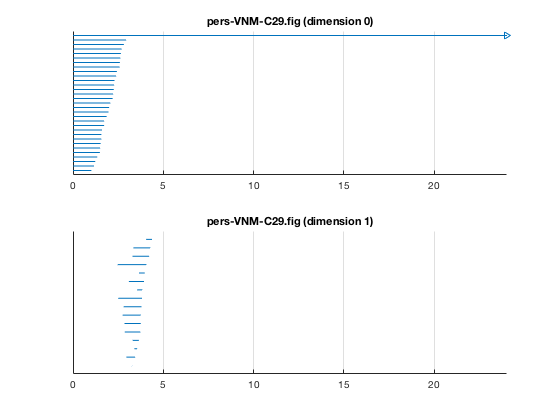}} \\
\caption{Dowker persistence diagrams for sub-networks obtained from the 2007 Asia machinery production network.}
    \label{fig: 2007 dendrogram 4}
\end{figure*}

In fact, the hierarchical dendrogram produced using the method in Definition \ref{def: hierarc} not only illustrates the hierarchy of the nodes based on their topological impact in the network, but also allows us to identify which collections of nodes have \emph{similar} impacts in the network. In particular, the nodes that first belong to $\mathcal{B}_{\mathcal{H}_G}(\text{STANDARD},t)$ at approximately the same value of $t$ are considered to have similar topological influence in the network. 

For example, it is clear from Figure \ref{fig: 2007 asia gamma dendrogram} that the nodes  THA-C31, MYS-C29, VNM-C30, VNM-C29 all first join $\mathcal{B}_{\mathcal{H}_G}(\text{STANDARD},t)$ at \emph{close} values of $t$ ($0.26 <t< 0.28$), implying that this set of nodes have similar topological influence in the network. 
Indeed, a close resemblance between the Dowker persistence diagrams corresponding to this set of nodes can be observed in Figure \ref{fig: 2007 dendrogram 4}.

\subsection{Hierarchical clustering for the 2011 Asia machinery production network}
\begin{figure}
    \centering
    \includegraphics[width=14cm]{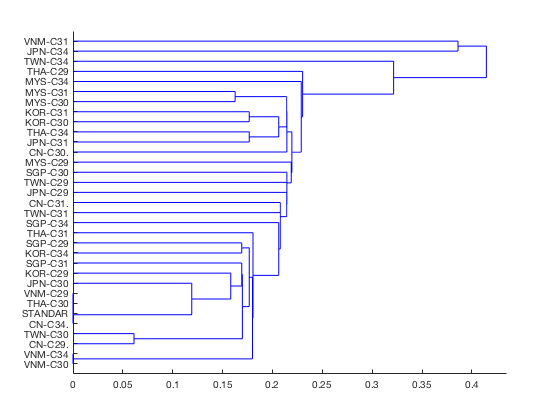}
    \caption{Hierarchichal dendrogram associated to the 2011 Asia machinery production network.}
    \label{fig: 2011 asia dendrogram}
\end{figure}

Now we analyze the hierarchical dendrogram associated to the 2011 Asia machinery production network obtained from Definition \ref{def: hierarc}. 
We hope that comparing the 2011 network hierarchical dendrogram (Figure  \ref{fig: 2011 asia dendrogram}) with the 2007 network hierarchical dendrogram (Figure  \ref{fig: 2007 asia gamma dendrogram}) will afford insights into how the country-industry pairs' influence in the topological structure of the network changed in the wake of the financial crisis. 

In contrast to the 2007 hierarchical dendrogram where $\text{JPN-C34} \not \in \mathcal{B}_{\mathcal{H}_G}(\text{STANDARD},t)$ when $t<0.48$ while $ \mathcal{B}_{\mathcal{H}_G}(\text{STANDARD},0.42) = \{S_G \setminus  \{\text{JPN-C34}\} \} $, the 2011 hierarchical dendrogram  (Figure  \ref{fig: 2011 asia dendrogram}) reflects that the absolute power of the node JPN-C34 in the network has significantly weakened; it now shares the influential role with another node: VNM-C31. The position change of JPN-C34 in the hierarchy further reflects that over the course of 2007 to 2011, the Japanese industry has stagnated \cite{doi:10.1080/10168737.2016.1148398} both in terms of its topological impacts in the structure of the network (reflected by the hierarchy) and its propagating influence through the entire network (reflected by the quasi-centrality).  

The Dowker persistence diagram of the original network $\gamma(G)$ is illustrated in Figure \ref{fig: 2011 dendrogram 2}(a) and the Dowker persistence diagrams obtained by deleting the node JPN-C34  (VNM-C31, resp.) from $\gamma(G)$ are illustrated in Figure \ref{fig: 2011 dendrogram 2}(b) ((c), resp.) respectively. We see that both JPN-C34 and VNM-C31 had cause big changes in the network topology in terms of the 1-cycles. Indeed, because JPN-C34 and VNM-C31 are "well-connected" to other nodes, deleting these nodes have a high potential disrupt the 1-dimensional homological features.

\begin{figure*}[ht!]
\subfloat[$\Dgm(\gamma(G))$. \label{fig:PKR}]{
      \includegraphics[ width=0.5 \textwidth]{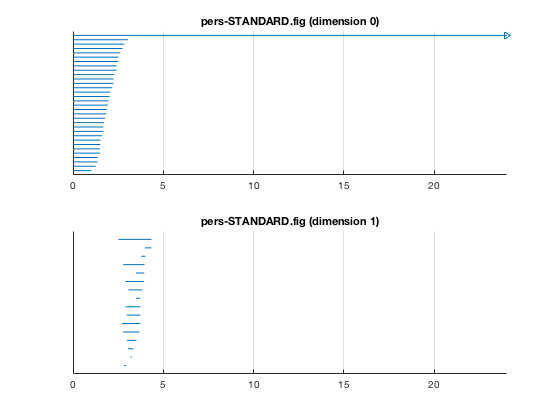}}
\hspace{\fill}
\subfloat[ $\Dgm(f(\gamma(G),\text{JPN-C34}))$. \label{fig:PKR}]{
      \includegraphics[ width=0.5\textwidth]{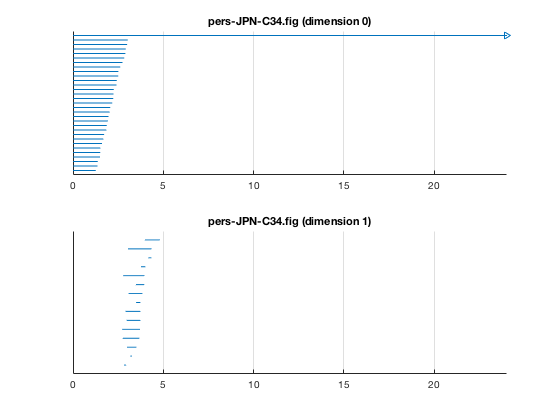}}
\hspace{\fill}
   \subfloat[$\Dgm(f(\gamma(G),\text{VNM-C31}))$. \label{fig:tie5}]{
      \includegraphics[ width=0.5\textwidth]{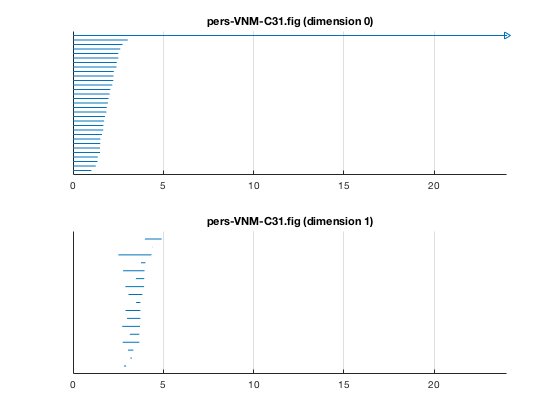}}
\hspace{\fill}
   \subfloat[$\Dgm(f(\gamma(G),\text{CN-C30}))$. \label{fig:tie5}]{
      \includegraphics[ width=0.5\textwidth]{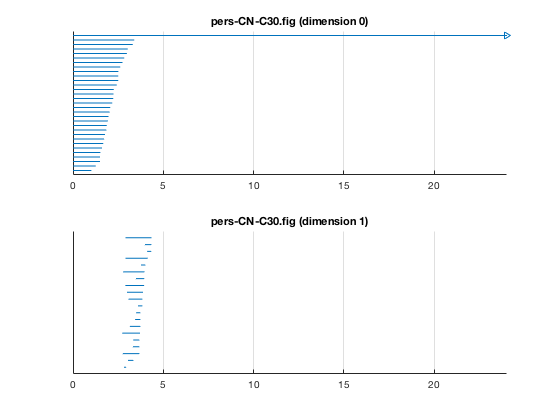}} \\
\caption{Dowker persistence diagrams for sub-networks obtained from the 2011 Asia machinery production network.}
    \label{fig: 2011 dendrogram 2}
\end{figure*}

Although CN-C30 had the largest quasi-centrality value as computed in Section 4.2, and that the Dowker persistence diagram of $f(\gamma(G),\text{CN-C30})$ illustrated in \ref{fig: 2011 dendrogram 2}(c) also reflects significant changes in the 0-dimensional barcodes compared to the Dowker persistence diagram of $\gamma(G)$, its impact on the 1-dimensional homological features is notably less than JPN-C34 and VNM-C31. 

We note that the hierarchical clustering method incorporates higher dimensional topology in the network while the quasi-centrality gives a direct measure of the extent of propagating influence of a node in terms of changes in the 0-dimensional homological features.

\subsection{Summary}

We used the bottleneck distance between persistence diagrams obtained from the Dowker sink complex, to cluster nodes by their impact on the topology of the network, with a high emphasis on higher order homological features. 
We showed that our hierarchical clustering dendrogram reflects both the extent to which each node changes the topology of the network and which collections of persistence diagrams have similar topological effects. 
 
We note the key distinction between computing quasi-centrality on nodes and hierarchical clustering: while the quasi-centrality gives a direct measure of the node's propagating influence by recording how frequently the node serves as a bridge between path components, i.e., 0-dimensional homological features, the hierarchical clustering method directly compares the Dowker persistence diagrams, which includes higher dimensional homological features. 
While the deletion of a node can potentially have a smaller impact on the 0-dimensional persistence barcodes, it might disrupt 1-cycles or higher dimensional cycles, illustrating its impact in the overall network topology.

\section{Discussion}
  
\paragraph{Summary}In this paper, we apply persistent homology to develop a network centrality measure \emph{quasi-centrality} that captures propagating influences in directed networks. 
We computed the quasi-centrality on the Asia machinery production network and demonstrated that our centrality measure accurately reflects the propagating influences of perturbations originating from individual nodes.

Furthermore, we introduced a method that allows us to express the hierarchy of nodes based on their impact in the topology of a directed network. By incorporating hierarchical clustering and the bottleneck distance, we are able to not only detect which nodes play a significant role in the topology of the network, but also determine which nodes change the topology of the network in a similar fashion.

\paragraph{Future directions} We expect that many other properties in networks (directed or undirected) admit characterizations using tools in TDA: one direction for future research is to define other measures in network analysis using TDA, such as density, robustness, efficiency, connectivity, etc. We expect that using TDA to investigate network properties allows us to bring a different perspective to examine networks.  

Another open question is how to relate higher dimensional homological features in the simplicial complexes that arises from networks to real-world phenomena, such as trade flow, embargo, value and supply chains, patterns of trade, biological cycles, etc. 

Another direction for future research is to apply the quasi-centrality measure on other directed networks. We expect that the quasi-centrality can not only be useful in assessing the influence of individual components within industries or trade networks, but also in other highly clustered complex directed networks, such as biological networks, air flight networks, etc.

We believe that many network structures such as trade flows, supply chains, and biological cycles can be better understood from a topological point of view by employing tools in TDA. Just as TDA has been successfully applied to detect intrinsic shapes in complex data sets, we believe that TDA can also find interesting structures in network topology and 
detect structural properties in networks, allowing us to better understand these networks, and gain more insights into the complex systems they model.

\paragraph{Limitations} We note that although the quasi-centrality measure and the hierarchical clustering function we defined in this paper can be useful for detecting nodes' propagating influences and topological impacts in a given directed network, computations and dendrograms generated are highly computationally costly as the number of nodes in the network exceeds 70-80 nodes.

\section{Acknowledgements}
I would like to sincerely thank my mentor Lucy Yang of Harvard University for providing thorough guidance and invaluable advice. I would like to thank the MIT-PRIMES program for providing me this invaluable opportunity and resources to work on this project. I would like to thank Professor Memoli of the Ohio State University and Dr. Chowdury of Stanford University for their generous help and discussions. I would also like to thank Dr. Tanya Khovanova and Dr. Kent Vashaw of MIT for carefully proofreading the paper and offering helpful suggestions.



\printbibliography
\end{document}